\documentclass[aps,pra,twocolumn,groupedaddress]{revtex4-1}
\usepackage{graphicx}
\usepackage{epstopdf}
\usepackage{braket}
\usepackage{lipsum}

\usepackage{amsmath,amssymb,amsfonts,amsthm}
\usepackage{multirow}
\usepackage{verbatim}
\usepackage{url}
\usepackage{comment}
\usepackage{mathtools}
\usepackage{epsf,pgf,graphicx}
\usepackage{color}
\usepackage{tikz,pgf}
\usepackage{xcolor}
\usepackage{makecell}
\newtheorem{proposition}{Proposition}
\newtheorem{lemma}{Lemma}
\newtheorem{theorem}{Theorem}

\usepackage{adjustbox}
\usepackage[colorlinks = true,
            linkcolor = blue,
            urlcolor  = blue,
            citecolor = blue,
            anchorcolor = blue]{hyperref}
\usepackage{subcaption}

\begin{document}
\title{Optimal Toffoli-Depth Multi-Controlled Toffoli Decomposition in 2D Qubit Layout}
\author{Anik Basu Bhaumik, Suman Dutta, Anupam Chattopadhyay}
\affiliation{College of Computing \& Data Science, Nanyang Technological University, Singapore}
\email{anikbasu001@e.ntu.edu.sg\\ sumand.iiserb@gmail.com\\ anupam@ntu.edu.sg}
\begin{abstract}
The multi-controlled Toffoli (MCT) gate is a key primitive in quantum arithmetic, oracle construction, and quantum cryptanalysis. Although recent work has established optimal Toffoli-depth MCT decompositions under all-to-all qubit connectivity, their realization on near-term quantum hardware with restricted qubit connectivity remains largely unexplored. While general-purpose quantum mappers can route arbitrary circuits, they do not explicitly exploit the repeated interaction patterns inherent in MCT decompositions.

In our present paper, we study architecture-aware mappings of optimal Toffoli-depth MCT decompositions onto restricted two-dimensional qubit layouts. We begin with a structured geometric placements that preserve the parallelism of state-of-the-art Toffoli and MCT decompositions with no additional depth overhead. We further introduce a motif-based packing framework in which decomposition layers are represented by interaction motifs derived from basic Toffoli gates. By embedding these motifs vertex-disjointly into hardware graphs, we characterize the minimum-size topologies supporting the required qubit resources and derive explicit bounds on the resulting depth overhead under tight qubit budgets. Finally, we compare these bounds with routing-aware placement heuristics and empirically evaluate the effectiveness of embedding different motifs across a range of hardware topologies.
\end{abstract}
\maketitle

\section{Introduction}
\label{sec:intro}
Quantum gates are the fundamental building blocks of quantum circuits~\cite{gates1995}. Unlike classical gates, quantum gates are inherently reversible. Thus, an $n$-qubit quantum gate is mathematically represented by a $2^n\times 2^n$ unitary matrices. The doubly-controlled X-gate, commonly known as the Toffoli gate, is among the most significant quantum gates, with applications in arithmetic circuit design, reversible algorithms, oracle constructions, and error-correction subroutines. Consequently, their efficient synthesis is critical for realizing scalable quantum computation on fault-tolerant hardware. Within the Clifford+T universal gate set, the cost of implementing such gates is typically dominated by non-Clifford T gates, making T-count and T-depth key metrics for assessing circuit feasibility.

Over the last few decades, numerous efforts have been made to reduce the resource requirements for the efficient implementation of multi-controlled Toffoli (MCT) gates. For basic Toffoli gates, state-of-the-art techniques employ measurement-based uncomputation (see FIG.~\ref{fig:toffoli}), achieving Toffoli decompositions using only four T gates~\cite{gidney2018quantum, jaques2020eurocrypt}. On the other hand, recent works on MCT decomposition use the conditionally clean ancilla technique~\cite{nie2024arxiv, khattar2025quantum, dutta2025pra} to reduce the ancilla count and circuit depth.
\begin{figure}[htbp]
	\centering
	\begin{subfigure}{.49\textwidth}
		\centering
		\includegraphics[width=1\linewidth]{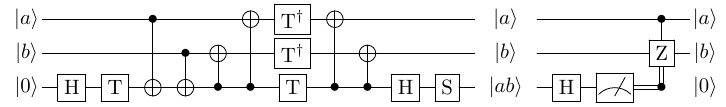}
		\caption{}
		\label{fig:logicalAND}
	\end{subfigure}
	\begin{subfigure}{.49\textwidth}
		\centering
		\includegraphics[width=1\linewidth]{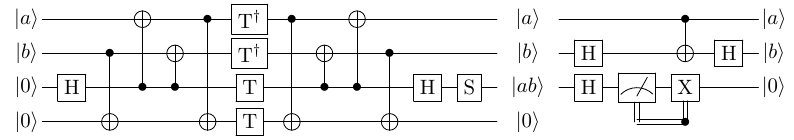}
		\caption{}
		\label{fig:Jaques}
	\end{subfigure}
	\caption{Logical circuits for basic Toffoli decompositions with measurement-based uncomputation: (a) without ancilla \cite{gidney2018quantum}, and (b) with a single reusable ancilla \cite{jaques2020eurocrypt}.}
	\label{fig:toffoli}
\end{figure}

Although these recent advancements have significantly improved logical-level performance, their practicality remains constrained by the architectural realities of modern quantum processors. Contemporary hardware platforms, such as superconducting qubits~\cite{nakamura1999nature}, trapped-ion systems~\cite{cirac1995prl}, quantum dots~\cite{quantum-dot1998}, and neutral-atom~\cite{jaksch2000prl,henriet2020quantum}, predominantly employ two-dimensional qubit connectivity. Common examples include rectangular and square-grid lattices~\cite{helmer2009epl}, heavy-hex~\cite{chamberland2020prx}, brickwork layouts \cite{datta2022ismvl}, octagonal \cite{li2023prr} or triangular lattices \cite{yang2025arxiv}, H-tree networks~\cite{leiserson1980htree}, etc.
\begin{figure}[htbp]
	\centering
	\begin{subfigure}[b]{.22\textwidth}
		\centering
		\begin{tikzpicture}[
  node/.style={circle, draw, inner sep=2pt},
  ed/.style={black, line cap=round},
  scale=1
]
  \def\W{4} 
  \def\H{4} 

  \foreach \y in {0,...,\numexpr\H-1\relax} {
    \foreach \x in {0,...,\numexpr\W-1\relax} {
      \node[node] (n-\x-\y) at (\x,\y) {};
    }
  }

  \foreach \y in {0,...,\numexpr\H-1\relax} {
    \foreach \x in {0,...,\numexpr\W-2\relax} {
      \draw[ed] (n-\x-\y) -- (n-\the\numexpr\x+1\relax-\y);
    }
  }

  \foreach \x in {0,...,\numexpr\W-1\relax} {
    \foreach \y in {0,...,\numexpr\H-2\relax} {
      \draw[ed] (n-\x-\y) -- (n-\x-\the\numexpr\y+1\relax);
    }
  }
\end{tikzpicture}
		\caption{}
		\label{fig:square}
	\end{subfigure}
	\begin{subfigure}[b]{.22\textwidth}
		\centering
		\begin{tikzpicture}[
  node/.style={circle, draw, inner sep=2pt},
  ed/.style={black, line cap=round},
  scale=1
]
  \def\W{4} 
  \def\H{4} 
  \def\a{1} 
  \def\h{0.8660254} 

  \foreach \y in {0,...,\numexpr\H-1\relax} {
    \foreach \x in {0,...,\numexpr\W-1\relax} {
      \pgfmathsetmacro{\xx}{(\x + 0.5*mod(\y,2))*\a}
      \pgfmathsetmacro{\yy}{\y*\h*\a}
      \node[node] (n-\x-\y) at (\xx,\yy) {};
    }
  }

  \foreach \y in {0,...,\numexpr\H-1\relax} {
    \foreach \x in {0,...,\numexpr\W-1\relax} {

      \ifnum\x<\numexpr\W-1\relax
        \draw[ed] (n-\x-\y) -- (n-\the\numexpr\x+1\relax-\y);
      \fi

      \ifnum\y<\numexpr\H-1\relax
        \ifodd\y
          \draw[ed] (n-\x-\y) -- (n-\x-\the\numexpr\y+1\relax);
          \ifnum\x<\numexpr\W-1\relax
            \draw[ed] (n-\x-\y) -- (n-\the\numexpr\x+1\relax-\the\numexpr\y+1\relax);
          \fi
        \else
          \draw[ed] (n-\x-\y) -- (n-\x-\the\numexpr\y+1\relax);
          \ifnum\x>0\relax
            \draw[ed] (n-\x-\y) -- (n-\the\numexpr\x-1\relax-\the\numexpr\y+1\relax);
          \fi
        \fi
      \fi
    }
  }
\end{tikzpicture}
		\caption{}
		\label{fig:triangular}
	\end{subfigure}\\
	\begin{subfigure}[b]{.22\textwidth}
		\centering
		\begin{tikzpicture}[
  scale=0.95,
  dot/.style={circle, draw=black, line width=0.6pt, fill=white, inner sep=2.2pt},
  edge/.style={line width=0.7pt, black},
]

\def\C{3}   
\def\R{2}   
\pgfmathtruncatemacro{\Ymax}{2*\R}

\foreach \x in {0,...,\numexpr\C-1\relax} {
  \foreach \t in {0,...,\numexpr\R-1\relax} {
    \pgfmathtruncatemacro{\y0}{2*\t + mod(\x,2)}
    \pgfmathtruncatemacro{\cc}{mod(\x+\t,3)}  

  }
}

\foreach \x in {0,...,\C} {
  \foreach \y in {0,...,\numexpr\Ymax-1\relax} {
    \draw[edge] (\x,\y) -- (\x,\y+1);
  }
}

\foreach \x in {0,...,\numexpr\C-1\relax} {
  \foreach \y in {0,...,\Ymax} {
    \pgfmathtruncatemacro{\px}{mod(\x,2)}
    \pgfmathtruncatemacro{\py}{mod(\y,2)}
    \ifnum\px=\py
      \draw[edge] (\x,\y) -- (\x+1,\y);
    \fi
  }
}

\foreach \x in {0,...,\C} {
  \foreach \y in {0,...,\Ymax} {
    \node[dot] at (\x,\y) {};
  }
}
\end{tikzpicture}
		\caption{}
		\label{fig:hex}
	\end{subfigure}
	\begin{subfigure}[b]{.22\textwidth}
		\centering
		\begin{tikzpicture}[
  sq/.style={circle, draw=black, line width=0.6pt, fill=white, inner sep=2.2pt},
  ed/.style={black, line cap=round},
  scale =0.43
]
\def\a{2.0}   
\def\Y{3.0}   
\def\b{1.2}   
\def\c{1.2}   

\newcommand{\Sqnode}[3]{\node[sq] (#1) at (#2,#3) {};}

\Sqnode{LM}{0}{0}
\Sqnode{C}{\a}{0}
\Sqnode{RM}{2*\a}{0}
\draw[ed] (LM) -- (C) -- (RM);

\Sqnode{TLc}{0}{\Y}
\Sqnode{TLl}{-\b}{\Y}
\Sqnode{TLr}{ \b}{\Y}
\Sqnode{TLlt}{-\b}{\Y+\c}
\Sqnode{TLlb}{-\b}{\Y-\c}
\Sqnode{TLrt}{ \b}{\Y+\c}
\Sqnode{TLrb}{ \b}{\Y-\c}

\draw[ed] (TLl) -- (TLc) -- (TLr);
\draw[ed] (TLlt) -- (TLl) -- (TLlb);
\draw[ed] (TLrt) -- (TLr) -- (TLrb);
\draw[ed] (TLc) -- (LM);

\Sqnode{TRc}{2*\a}{\Y}
\Sqnode{TRl}{2*\a-\b}{\Y}
\Sqnode{TRr}{2*\a+\b}{\Y}
\Sqnode{TRlt}{2*\a-\b}{\Y+\c}
\Sqnode{TRlb}{2*\a-\b}{\Y-\c}
\Sqnode{TRrt}{2*\a+\b}{\Y+\c}
\Sqnode{TRrb}{2*\a+\b}{\Y-\c}

\draw[ed] (TRl) -- (TRc) -- (TRr);
\draw[ed] (TRlt) -- (TRl) -- (TRlb);
\draw[ed] (TRrt) -- (TRr) -- (TRrb);
\draw[ed] (TRc) -- (RM);

\Sqnode{BLc}{0}{-\Y}
\Sqnode{BLl}{-\b}{-\Y}
\Sqnode{BLr}{ \b}{-\Y}
\Sqnode{BLlt}{-\b}{-\Y+\c}
\Sqnode{BLlb}{-\b}{-\Y-\c}
\Sqnode{BLrt}{ \b}{-\Y+\c}
\Sqnode{BLrb}{ \b}{-\Y-\c}

\draw[ed] (BLl) -- (BLc) -- (BLr);
\draw[ed] (BLlt) -- (BLl) -- (BLlb);
\draw[ed] (BLrt) -- (BLr) -- (BLrb);
\draw[ed] (BLc) -- (LM);

\Sqnode{BRc}{2*\a}{-\Y}
\Sqnode{BRl}{2*\a-\b}{-\Y}
\Sqnode{BRr}{2*\a+\b}{-\Y}
\Sqnode{BRlt}{2*\a-\b}{-\Y+\c}
\Sqnode{BRlb}{2*\a-\b}{-\Y-\c}
\Sqnode{BRrt}{2*\a+\b}{-\Y+\c}
\Sqnode{BRrb}{2*\a+\b}{-\Y-\c}

\draw[ed] (BRl) -- (BRc) -- (BRr);
\draw[ed] (BRlt) -- (BRl) -- (BRlb);
\draw[ed] (BRrt) -- (BRr) -- (BRrb);
\draw[ed] (BRc) -- (RM);

\end{tikzpicture}
		\caption{}
		\label{fig:htree}
	\end{subfigure}
	\caption{Representative two-dimensional qubit layouts: (a) square lattice~\cite{leiserson1980htree}; (b) triangular lattice~\cite{yang2025arxiv}; (c) hexagonal (honeycomb) \cite[FIG. 10d]{cowtan2019tqc} and (d) H-tree layout~\cite{leiserson1980htree}.}
	\label{fig:2dlayout}
\end{figure}

Beyond these practically realizable two-dimensional qubit arrangements, recent studies have also explored higher-dimensional, theoretically motivated qubit topologies intended to reduce depth overhead. Notable examples include the cyclic-butterfly interconnect~\cite{brierly2017qic}, inspired by the recursive Bene\u s network~\cite{benevs1964bstj} and higher dimensional hypercube-based layouts \cite{beals2013rspa}. Although such architectures remain beyond current physical hardware capabilities, they provide valuable insight into upper bounds on resource requirements and guide the design of future scalable quantum architectures. Topologically, these higher-dimensional architectures correspond to higher-degree qubit connectivity. For simplicity, we henceforth refer to all the qubit layouts discussed so far collectively as 2D qubit layouts.

Designing quantum circuits at the logical level and on a 2D qubit layout are fundamentally different. Conventionally, in a logical circuit, qubits are arranged along a vertical line, and gates are applied sequentially from left to right. By contrast, a 2D layout places qubits on a surface, where the circuit evolution is typically depicted from bottom to top.
Moreover, logical circuit design typically assumes an abstract interaction model in which gates can be applied between arbitrary qubits, independent of their physical locations. In graph-theoretic terms, this corresponds to an idealized all-to-all connectivity model, where each qubit may interact with up to ($N_q-1$) others in an ($N_q$)-qubit circuit. In contrast, physical implementations are constrained by the connectivity graph of the hardware, which is often sparse and local in two-dimensional architectures. Therefore, multi-qubit quantum gates between non-adjacent qubits must be supported by additional mapping, routing, or scheduling operations, increasing the overall space-time cost of execution.
Resource estimation and circuit optimization for these 2D qubit layout differ substantially from those in logical circuits, particularly due to increased routing overhead, CNOT map constraints, SWAP-network complexity, and locality-induced depth inflation. Consequently, layout-aware circuit synthesis has become essential for accurately evaluating compilation cost and identifying architecture-specific decompositions.

In this paper, we aim to bridge the gap between logically optimal MCT decompositions from~\cite{dutta2025pra,dutta2026qic} and their physical realizations under 2D qubit connectivity constraints (see TABLE~\ref{tab:mct-comparison}). We undertake a comprehensive analysis of existing decomposition strategies for both basic and multi-controlled Toffoli gates across a range of qubit layouts (including the mapping technique used in IBM Quantum) and highlight the structural features that determine routing overhead. Our results advance the state-of-the-art in architecture-constrained Toffoli and MCT decomposition and contribute to the broader objective of optimizing quantum circuits under physically implementable hardware models.

The resource metrics under consideration are the Toffoli count, Toffoli depth, and Ancilla count. The Toffoli count and Toffoli depth can be further refined with the T-Count and T-Depth, respectively. Moreover, to satisfy connectivity constraints in a finite-size topology, SWAP operations are introduced, and the resulting routing cost is quantified by the SWAP-depth overhead incurred during the mapping.

\subsection{Mapping logical circuits to 2D qubit layouts}
\label{sub:mapping}
Mapping logical quantum circuits onto 2D qubit layouts is a foundational step in quantum compilation~\cite{datta2022mapping, cowtan2019tqc}. These physical topologies typically follow structured lattices, such as the square grids used in Google Sycamore \cite{arute2019quantum} or the heavy-hex layouts of IBM Brisbane~\cite{jurcevic2021qst}. As quantum hardware scales, these architectures are parameterized by size, allowing for theoretically infinite expansion.

One of the primary constraints in 2D layouts is the requirement for physical adjacency between interacting qubits. In architectures with limited connectivity, this constraint is satisfied by inserting SWAP gates to move logical qubits into neighboring positions. Since each SWAP gate decomposes into three alternating CNOT gates, their proliferation increases resource overheads and thus susceptibility to noise. Consequently, minimizing SWAP overhead is critical in architecture-aware circuit mapping \cite{sengupta2023aqucide, zou2025arXiv}.

The mapping problem, i.e., deciding where to place the qubits and how to route required interactions, is computationally challenging. Optimal placement and routing are NP-hard \cite{bhattacharjee2017arxiv, wille2019dac}, prompting extensive work on bounding the depth overhead for various topologies. Theoretical approaches often treat qubit routing as a graph-permutation problem. Alon et al.~\cite{alon1994siam} characterise the routing problem in this way and provide upper bounds on the depth of the routing circuit (routing number $rt(G)$) for several graph families using matching, including paths($P_n)$, grids, and hypercubes($Q_r$). Childs et al.~\cite{childs2019tqc} extend this matching-based methodology for routing in quantum layouts to derive upper bounds on depth overhead.

An alternative line of work employs sorting-based techniques. By representing qubit movement as a sorting problem, this method determines SWAP paths for long-range interactions by treating the underlying topology as a specialized sorting network. In this way, it provides upper bounds on the SWAP depth for architectures with limited connectivity \cite{beals2013rspa}. This approach has produced competitive bounds on several topologies, with the cyclic butterfly network achieving an overhead of $6\log_2 n$ \cite{brierly2017qic}.

In practical applications, frameworks like IBM’s Qiskit manage these constraints using coupling maps, graph representations of hardware connectivity. The standard mapping tool in Qiskit is SABRE (SWAP-Aware Buffer Re-arrangement) \cite{li2019asplos}, a heuristic-based algorithm that routes logical qubits onto the physical layout. However, these heuristics are often non-deterministic and may not exploit specific gate symmetries or topological redundancies.

To address these limitations, this work establishes two primary motivations:
\begin{itemize}
	\item \textbf{Topology-aware placement on infinite grid architecture}: The SWAP overhead for an $n$-MCT gate strongly depends on the available qubit connectivity. We show that by allowing sufficiently large topologies, an $n$-MCT gate can be embedded without incurring additional depth overhead. This leads to a topology-driven placement strategy that eliminates routing costs using the required ancilla.
	\item \textbf{Bounding depth overhead under minimal topology-size constraints}: Given the non-deterministic nature of heuristic mappers, we propose a method to bound the additional depth, based on minimum qubit level constraints. By defining the minimum required topology size $$r_{\min}= \operatorname*{min}_{r \in \mathbb{N}} \{ r \mid \mathrm{size}(r) \ge q_{\mathrm{req}} \},$$
	Where $q_{req}$ denotes the total qubits that are mapped to the hardware. Through this constraint, we provide an architecture-aware placement strategy that ensures predictable performance on known 2D layouts.
\end{itemize}
The following subsection provides a brief summary of the earlier works in this direction.

\subsection{Related works}
\label{sub:related work}
Implementing quantum circuits on 2D qubit layouts has been an active research direction for several decades, with multiple approaches explored from different perspectives. 

In 2010, Saeedi et al.~\cite{saeedi2011qinp} extended existing synthesis flows to obtain Linear Nearest-Neighbor (LNN)-compliant circuits. Their method introduced template-matching optimizations, an exact synthesis strategy, and a couple of reordering heuristics, achieving an average improvement of over 50\% in quantum cost. In 2011, Hirata et al.~\cite{hirata2011qic} conjectured that, given an arbitrary quantum circuit, finding an LNN realization with the minimum number of SWAP gates is NP-complete. In 2013, Shafaei et al.~\cite{shafaei2013dac} employed a Minimum Linear Arrangement (MinLA) formulation on the qubit interaction graph for local optimization. Their approach partitions circuits into subcircuits, applies MinLA-based reordering within each part, and inserts SWAP operations both locally and between partitions to ensure functional equivalence. In 2014, Wille et al.~\cite{wille2014dac} proposed an optimal SWAP insertion strategy using pseudo-Boolean optimization to enforce nearest-neighbor constraints. Additionally, in 2018, Zulehner et al.~\cite{zulehner2018date} developed an architecture-aware mapping framework for IBM QX architectures combining A* search, look-ahead heuristics, and depth-based partitioning. This approach, later integrated into Qiskit, significantly reduces mapping time while minimizing gate and depth overhead, often producing results within seconds compared to the hour-scale runtimes of IBM's original solution.

In parallel, numerous works have studied 2D implementations of key quantum subroutines, including the quantum Fourier transform~\cite{takahashi2007qic, maslov2007pra}, Shor's algorithm~\cite{fowler2004qic, pham2013qic}, quantum error correction~\cite{fowler2004pra}, and so on. Several attempts have also focused on implementing Toffoli, MCT, and related gates under restricted 2D layouts. In 2019, Hu et al.~\cite{hu2019dac} proposed an efficient MCT decomposition on 2D square lattices using fractal H-tree layouts and relative-phase Toffoli gates~\cite{maslov2016pra}. They demonstrated MCT implementations with up to 12 controls without SWAP overhead and achieved circuit depths lower than the standard baseline~\cite{nc}. In 2021, Matsuo et al.~\cite{matsuo2021aspdac} introduced a framework for dynamically decomposing mixed-polarity MCT (MPMCT) gates, integrating SWAP insertion directly into the decomposition step rather than treating routing and decomposition sequentially. In 2023, Paler et al.~\cite{paler2024nanoarch} presented a Clifford+T realization of the Toffoli gate with T-depth 2 and zero SWAP overhead on 2D square lattices.

Multiple studies have considered mapping complex quantum gates within IBM-specific coupling architectures. In 2020, Niemann et al.~\cite{niemann2020dsd} investigated the mapping of reversible (MCT-based) circuits onto the IBM QX5 (R\"uschlikon) device, showing that hardware-aware decompositions combined with circuit-structural insights can substantially reduce overhead. They also presented a 2D mapping of the T-depth-3 Toffoli decomposition and reported empirical runtimes on IBM QX5. Later, Niemann et al.~\cite{niemann2021date} demonstrated that combining SWAP operations with remote-Toffoli constructions can further reduce mapping costs. Their study compared resource estimates for basic and multi-controlled Toffoli gates across IBM QX5 and IBM Q20 architectures. However, most of these constructions predate recent advances in Toffoli and MCT decompositions and therefore require significant revision. 

In this paper, we implement state-of-the-art Toffoli and MCT decompositions from~\cite[TABLEs I and II]{dutta2025pra} across various 2D qubit layouts and provide updated resource estimates. 

\subsection{Organization and contributions}
\label{sub:org}
Section~\ref{sec:warmup} serves as a warm-up by presenting 2D mappings of existing basic Toffoli decompositions on various qubit layouts. For each decomposition, the chosen layout matches the corresponding qubit-interaction graph, thereby eliminating any additional resource overhead when mapping the logical circuit to its 2D implementation.

In Section~\ref{sec:cont1}, we investigate the mapping of optimal Toffoli-depth MCT decomposition techniques onto various 2D layouts, assuming infinitely scalable topologies and hence incurring no additional depth overhead. Specifically, we consider triangular, square-grid, square-with-diagonal, and H-tree layouts, and derive the corresponding quantum resource requirements for implementing an $n$-MCT gate.

Section~\ref{sec:depth_bound} derives upper bounds on the depth overhead for an architecture-aware placement technique under minimum-size hardware constraints. We model the local interaction graphs induced by various Toffoli decompositions as graph motifs and embed them optimally within constrained topologies. The resulting MCT decomposition layers are then mapped onto 2D square-grid and hypercube layouts. These bounds quantify the additional depth overhead incurred by motif packing under tight qubit budgets.

In Section~\ref{sec:experiments}, we validate the proposed bounds using routing-aware heuristics. As an additional assessment, we evaluate how efficiently different motifs can be packed into various hardware topologies. In particular, we report matching-rate results for several MIS-based packing heuristics to assess motif-embedding quality. 

Finally, Section~\ref{sec:con} concludes the paper by summarizing our contributions and outlining several open problems for future research.

\section{Warm up: Mapping basic Toffoli decompositions onto 2D Qubit Layouts Without SWAP}
\label{sec:warmup}
In this section, we examine mapping strategies for the basic Toffoli decompositions from~\cite[TABLE I]{dutta2025pra} onto various 2D qubit layouts, including square-grid, triangular, hexagonal, and H-tree topologies, without introducing SWAP operations to satisfy CNOT connectivity. Since our primary objective here is to eliminate SWAP-induced depth overhead, we manually analyze the CNOT connectivity requirements of each decomposition and identify the corresponding 2D layouts that support them natively.

We begin with the 2D mapping of the 7-T Toffoli decompositions from~\cite{nc, amy2013ieee, selinger2013pra}, followed by state-of-the-art decompositions employing measurement-based uncomputation using 4 T gates. A summary of the resulting resource estimates for all considered layouts is provided in TABLE~\ref{tab:tof-2d}.

\subsection{Basic Toffoli decomposition without ancilla on triangular lattice topology}
\label{sub:triangle}
In~\cite[Chap. 4, Sec. 4.3]{nc}, Nielsen and Chuang presented a Clifford+T decomposition of the basic Toffoli gate using seven T gates, with a T-depth of 6. In this logical decomposition (FIG.~\ref{fig:NCTD6}), all three qubits are mutually connected through CNOT operations. Consequently, to avoid SWAP overhead when mapping the logical circuit to a 2D qubit layout, both the two control qubits ($c_1, c_2$) and the target qubit ($t$) must be placed adjacent to one another, naturally requiring a triangular layout.
\begin{figure}[htbp]
	\centering
	\begin{subfigure}{0.48\textwidth}
		\centering
		\includegraphics[width=1\linewidth]{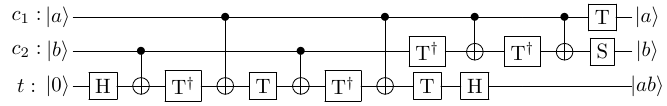}
		\caption{}
		\label{fig:NCTD6}
	\end{subfigure}\\
	\begin{subfigure}{0.48\textwidth}
		\centering
		\scalebox{0.72}{\begin{tikzpicture}[
open/.style={circle, draw=black, fill = white, thick, minimum size=2.5mm, inner sep=0pt},
open_nt/.style={fill = white, thick, minimum size=3mm, inner sep=0pt},
open_cn/.style={circle, minimum size=2.5mm, inner sep=0pt},
open_ctr/.style={circle, minimum size=2mm,fill = black, inner sep=0pt},
gate_sq/.style={rectangle, draw=black, fill = white, thick, minimum size=3.8mm, inner sep=0pt},
filled/.style={circle, draw=black, thick, fill=black, minimum size=5mm, inner sep=0pt},
  op/.style={circle, draw=black, thick, minimum size=6mm, inner sep=0pt, font=\small},
  edge/.style={black, thick},
  edge_d/.style={black, thick,dashed},
  gate/.style={draw=black, thick, minimum width=6mm, minimum height=6mm, inner sep=1pt, font=\small},
  arr/.style={->, thick}
]

\newcommand{\TriangleGraph}[5]{%
  \begin{scope}[shift={(#1,#5)}]
    \coordinate (c1) at (0,0);
    \coordinate (c2) at (1,0);
    \coordinate (t) at (0.5,0.75);

    \draw[edge] (c1)--(c2);
    \draw[edge] (c1)--(t);
    \draw[edge] (c2)--(t);

    \node[open] (ntl) at (c1) {#2};
    \node[open] (ntr) at (c2) {#3};
    \node[open] (nbl) at (t) {#4};
  \end{scope}
}


\TriangleGraph{0}{}{}{}{0}
\node[font=\small] at (-0.15,-0.45) {$c_1: \,|a\rangle$};
\node[font=\small] at (1.2,-0.45) {$c_2:\,|b\rangle$};
\node[font=\small] at (0.7,1.15) {$t:\,|0\rangle$};
%
\node[] at (1.3, 0.6) {$\mathrm{:}$};

\begin{scope}[shift={(1.8,0)}]
    \coordinate (c1) at (0,0);
    \coordinate (c2) at (1,0);
    \coordinate (t) at (0.5,0.75);

    \draw[dotted] (c1)--(c2);
    \draw[dotted] (c1)--(t);
    \draw[dotted] (c2)--(t);

    \node[open] (nbl) at (c1){};
    \node[open] (nbl) at (c2){};
    \node[gate_sq] (nbl) at (t){$\mathrm{H}$};

\end{scope}


\draw[arr] (3,0.4) -- (3.2,0.6);

\begin{scope}[shift={(3.4,0)}]
    \coordinate (c1) at (0,0);
    \coordinate (c2) at (1,0);
    \coordinate (t) at (0.5,0.75);

    \draw[dotted] (c1)--(c2);
    \draw[edge] (c2)--(t);
    \draw[dotted] (c1)--(t);

    \node[open_cn] (ntr) at (t) {$\bigoplus$};
    \node[open] (nbl) at (c1) {};
    \node[open_ctr] (ntl) at (c2) {};
\end{scope}

\draw[arr] (4.6,0.4) -- (4.8,0.6);

\begin{scope}[shift={(5,0)}]
    \coordinate (c1) at (0,0);
    \coordinate (c2) at (1,0);
    \coordinate (t) at (0.5,0.75);

    \draw[dotted] (c1)--(c2);
    \draw[dotted] (c1)--(t);
    \draw[dotted] (c2)--(t);

    \node[open] (ntl) at (c1) {};
    \node[open] (ntr) at (c2) {};
    \node[gate_sq] (nbl) at (t) {$\mathrm{T}^{\dagger}$};

\end{scope}

\draw[arr] (6.2,0.4) -- (6.4,0.6);

\begin{scope}[shift={(6.6,0)}]
    \coordinate (c1) at (0,0);
    \coordinate (c2) at (1,0);
    \coordinate (t) at (0.5,0.75);

    \draw[edge] (c1)--(t);
    \draw[dotted] (c1)--(c2);
    \draw[dotted] (c2)--(t);

    \node[open_cn] (ntl) at (t) {$\bigoplus$};
    \node[open_ctr] (ntr) at (c1) {};
    \node[open] (nbl) at (c2) {};

\end{scope}

\draw[arr] (7.8,0.4) -- (8,0.6) ;


\begin{scope}[shift={(8.2,0.0)}]
    \coordinate (c1) at (0,0);
    \coordinate (c2) at (1,0);
    \coordinate (t) at (0.5,0.75);

    \draw[dotted] (c1)--(c2);
    \draw[dotted] (c1)--(t);
    \draw[dotted] (c2)--(t);

    \node[open] (nbl) at (c1) {};
    \node[open] (ntr) at (c2) {};
    \node[gate_sq] (nbl) at (t) {$\mathrm{T}$};
\end{scope}
\draw[arr] (9.4,0.4) -- (9.6,0.6);

\begin{scope}[shift={(9.8,0.0)}]
    \coordinate (c1) at (0,0);
    \coordinate (c2) at (1,0);
    \coordinate (t) at (0.5,0.75);

    \draw[dotted] (c1)--(c2);
    \draw[dotted] (c1)--(t);
    \draw[edge] (c2)--(t);

    \node[open_cn] (ntl) at (t) {$\bigoplus$};
    \node[open_ctr] (ntr) at (c2) {};
    \node[open] (nbl) at (c1) {};
\end{scope}

\draw[arr] (10.3,-0.5) -- (10.3,-0.8);

\begin{scope}[shift={(9.8,-2.0)}]
    \coordinate (c1) at (0,0);
    \coordinate (c2) at (1,0);
    \coordinate (t) at (0.5,0.75);

    \draw[dotted] (c1)--(c2);
    \draw[dotted] (c1)--(t);
    \draw[dotted] (c2)--(t);

    \node[open] (ntl) at (c1) {};
    \node[open] (ntr) at (c2) {};
    \node[gate_sq] (nbl) at (t) {$\mathrm{T}^{\dagger}$};
\end{scope}

\draw[arr] (9.6,-1.6) -- (9.4,-1.4);

\begin{scope}[shift={(8.2,-2.0)}]
    \coordinate (c1) at (0,0);
    \coordinate (c2) at (1,0);
    \coordinate (t) at (0.5,0.75);

    \draw[dotted] (c1)--(c2);
    \draw[edge] (c1)--(t);
    \draw[dotted] (c2)--(t);

    \node[open_ctr] (ntl) at (c1) {};
    \node[open] (ntr) at (c2) {};
    \node[open_cn] (nbl) at (t) {$\bigoplus$};
\end{scope}

\draw[arr] (8,-1.6) -- (7.8,-1.4);

\begin{scope}[shift={(6.6,-2.0)}]
    \coordinate (c1) at (0,0);
    \coordinate (c2) at (1,0);
    \coordinate (t) at (0.5,0.75);

    \draw[dotted] (c1)--(c2);
    \draw[dotted] (c1)--(t);
    \draw[dotted] (c2)--(t);

    \node[gate_sq] (nbl) at (t) {$\mathrm{T}$};
    \node[open] (ntl) at (c1) {};
    \node[gate_sq] (ntr) at (c2) {$\mathrm{T}^{\dagger}$};
\end{scope}

\draw[arr] (6.4,-1.6) -- (6.2,-1.4);

\begin{scope}[shift={(5.0,-2.0)}]
    \coordinate (c1) at (0,0);
    \coordinate (c2) at (1,0);
    \coordinate (t) at (0.5,0.75);

    \draw[edge] (c1)--(c2);
    \draw[dotted] (c1)--(t);
    \draw[dotted] (c2)--(t);

    \node[open_ctr] (ntl) at (c1) {};
    \node[open_cn] (ntr) at (c2) {$\bigoplus$};
    \node[gate_sq] (nbl) at (t) {$\mathrm{H}$};
\end{scope}

\draw[arr] (4.8,-1.6) -- (4.6,-1.4);

\begin{scope}[shift={(3.4,-2.0)}]

    \coordinate (c1) at (0,0);
    \coordinate (c2) at (1,0);
    \coordinate (t) at (0.5,0.75);

    \draw[dotted] (c1)--(c2);
    \draw[dotted] (c1)--(t);
    \draw[dotted] (c2)--(t);

    \node[open] (ntl) at (c1) {};
    \node[gate_sq] (ntr) at (c2) {$\mathrm{T}^{\dagger}$};
    \node[open] (nbl) at (t) {};
\end{scope}

\draw[arr] (3.2,-1.6) -- (3.0,-1.4);

\begin{scope}[shift={(1.8,-2.0)}]

    \coordinate (c1) at (0,0);
    \coordinate (c2) at (1,0);
    \coordinate (t) at (0.5,0.75);

    \draw[edge] (c1)--(c2);
    \draw[dotted] (c1)--(t);
    \draw[dotted] (c2)--(t);

    \node[open_ctr] (ntl) at (c1) {};
    \node[open_cn] (ntr) at (c2) {$\bigoplus$};
    \node[open] (nbl) at (t) {};
\end{scope}

\draw[arr] (1.6,-1.6) -- (1.4,-1.4);

\begin{scope}[shift={(0,-2.0)}]

    \coordinate (c1) at (0,0);
    \coordinate (c2) at (1,0);
    \coordinate (t) at (0.5,0.75);

    \draw[dotted] (c1)--(c2);
    \draw[dotted] (c1)--(t);
    \draw[dotted] (c2)--(t);

    \node[open] (ntl) at (c1) {};
    \node[gate_sq] (ntr) at (c2) {$\mathrm{S}$};
    \node[gate_sq] (nbl) at (t) {$\mathrm{T}$};
\end{scope}

\node[] at (0, -2.5) {$|a\rangle$};
\node[] at (1.1,-2.5) {$|b\rangle$};
\node[] at (-0.1, -1.2) {$|ab\rangle$};

\end{tikzpicture}}
		\caption{}
		\label{fig:NCTD6-2d}
	\end{subfigure}
	\caption{(a) T-depth 6 Toffoli decomposition from~\cite{nc}, and (b) its corresponding 2D implementation.}
	\label{fig:td6}
\end{figure}
The resulting nearest-neighbor circuit (FIG.~\ref{fig:NCTD6-2d}) consists of 13 sequential layers on top of each other, implying a circuit depth of 13, of which 6 contain T gates, giving a T-depth of 6. Because the mapping topology was selected based on the qubit interactions, no SWAP gate is required.

Similarly, in 2013, Amy et al.~\cite{amy2013ieee} introduced two Toffoli decompositions, each using seven T gates, requiring no ancilla qubits, and achieving T-depths of 3 (FIG.~\ref{fig:AmyTD3}) and 4 (FIG.~\ref{fig:AmyTD4}). In both these constructions, all three qubits interact through one or more CNOT gates, implying that every pair of qubits must be connected. This enforces degree-2 connectivity among the three qubits, again corresponding to a triangular-lattice topology.
\begin{figure}[ht]
	\centering
	\begin{subfigure}{0.48\textwidth}
		\centering
		\includegraphics[width=0.9\linewidth]{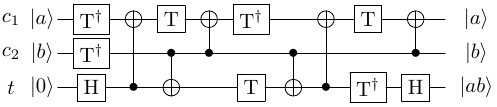}
		\caption{}
		\label{fig:AmyTD4}
	\end{subfigure}\\
	\begin{subfigure}{0.48\textwidth}
		\centering
		\scalebox{0.95}{\begin{tikzpicture}[
open/.style={circle, draw=black, fill = white, thick, minimum size=2.5mm, inner sep=0pt},
open_nt/.style={fill = white, thick, minimum size=3mm, inner sep=0pt},
open_cn/.style={circle, minimum size=2.5mm, inner sep=0pt},
open_ctr/.style={circle, minimum size=2mm,fill = black, inner sep=0pt},
gate_sq/.style={rectangle, draw=black, fill = white, thick, minimum size=3.8mm, inner sep=0pt},
filled/.style={circle, draw=black, thick, fill=black, minimum size=5mm, inner sep=0pt},
  op/.style={circle, draw=black, thick, minimum size=6mm, inner sep=0pt, font=\small},
  edge/.style={black, thick},
  edge_d/.style={black, thick,dashed},
  gate/.style={draw=black, thick, minimum width=6mm, minimum height=6mm, inner sep=1pt, font=\small},
  arr/.style={->, thick}
]

\newcommand{\TriangleGraph}[5]{%
  \begin{scope}[shift={(#1,#5)}]
    \coordinate (c1) at (0,0);
    \coordinate (c2) at (1,0);
    \coordinate (t) at (0.5,0.75);

    \draw[edge] (c1)--(c2);
    \draw[edge] (c1)--(t);
    \draw[edge] (c2)--(t);

    \node[open] (ntl) at (c1) {#2};
    \node[open] (ntr) at (c2) {#3};
    \node[open] (nbl) at (t) {#4};
  \end{scope}
}


\TriangleGraph{0}{}{}{}{0}
\node[font=\small] at (-0.15,-0.45) {$c_1: \,|a\rangle$};
\node[font=\small] at (1.2,-0.45) {$c_2:\,|b\rangle$};
\node[font=\small] at (0.7,1.15) {$t:\,|0\rangle$};
%
\node[] at (1.3, 0.6) {$\mathrm{:}$};

\begin{scope}[shift={(1.8,0)}]
    \coordinate (c1) at (0,0);
    \coordinate (c2) at (1,0);
    \coordinate (t) at (0.5,0.75);

    \draw[dotted] (c1)--(c2);
    \draw[dotted] (c1)--(t);
    \draw[dotted] (c2)--(t);

    \node[gate_sq] (nbl) at (c1){$\mathrm{T}^{\dagger}$};
    \node[gate_sq] (nbl) at (c2){$\mathrm{T}^{\dagger}$};
    \node[gate_sq] (nbl) at (t){$\mathrm{H}$};

\end{scope}


\draw[arr] (3,0.4) -- (3.2,0.6);

\begin{scope}[shift={(3.4,0)}]
    \coordinate (c1) at (0,0);
    \coordinate (c2) at (1,0);
    \coordinate (t) at (0.5,0.75);

    \draw[dotted] (c1)--(c2);
    \draw[edge] (c1)--(t);
    \draw[dotted] (c2)--(t);

    \node[open_ctr] (ntr) at (t) {};
    \node[open_cn] (nbl) at (c1) {$\bigoplus$};
    \node[open] (ntl) at (c2) {};
\end{scope}

\draw[arr] (4.6,0.4) -- (4.8,0.6);

\begin{scope}[shift={(5,0)}]
    \coordinate (c1) at (0,0);
    \coordinate (c2) at (1,0);
    \coordinate (t) at (0.5,0.75);

    \draw[dotted] (c1)--(c2);
    \draw[dotted] (c1)--(t);
    \draw[edge] (c2)--(t);

    \node[gate_sq] (ntl) at (c1) {$\mathrm{T}$};
    \node[open_ctr] (ntr) at (c2) {};
    \node[open_cn] (nbl) at (t) {$\bigoplus$};

\end{scope}

\draw[arr] (6.2,0.4) -- (6.4,0.6);

\begin{scope}[shift={(6.6,0)}]
    \coordinate (c1) at (0,0);
    \coordinate (c2) at (1,0);
    \coordinate (t) at (0.5,0.75);

    \draw[edge] (c1)--(c2);
    \draw[dotted] (c1)--(t);
    \draw[dotted] (c2)--(t);

    \node[open] (ntl) at (t) {};
    \node[open_cn] (ntr) at (c1) {$\bigoplus$};
    \node[open_ctr] (nbl) at (c2) {};

\end{scope}

\draw[arr] (7.1,-0.5) -- (7.1,-0.8) ;

\begin{scope}[shift={(6.6,-2.0)}]
    \coordinate (c1) at (0,0);
    \coordinate (c2) at (1,0);
    \coordinate (t) at (0.5,0.75);

    \draw[dotted] (c1)--(c2);
    \draw[dotted] (c1)--(t);
    \draw[dotted] (c2)--(t);

    \node[gate_sq] (nbl) at (t) {$\mathrm{T}$};
    \node[open] (ntr) at (c2) {};
    \node[gate_sq] (ntl) at (c1) {$\mathrm{T}^{\dagger}$};
\end{scope}

\draw[arr] (6.4,-1.6) -- (6.2,-1.4);

\begin{scope}[shift={(5.0,-2.0)}]
    \coordinate (c1) at (0,0);
    \coordinate (c2) at (1,0);
    \coordinate (t) at (0.5,0.75);

    \draw[edge] (t)--(c2);
    \draw[dotted] (c1)--(t);
    \draw[dotted] (c2)--(c1);

    \node[open] (ntl) at (c1) {};
    \node[open_ctr] (ntr) at (c2) {};
    \node[open_cn] (nbl) at (t) {$\bigoplus$};
\end{scope}

\draw[arr] (4.8,-1.6) -- (4.6,-1.4);

\begin{scope}[shift={(3.4,-2.0)}]

    \coordinate (c1) at (0,0);
    \coordinate (c2) at (1,0);
    \coordinate (t) at (0.5,0.75);

    \draw[dotted] (c1)--(c2);
    \draw[edge] (c1)--(t);
    \draw[dotted] (c2)--(t);

    \node[open_cn] (ntl) at (c1) {$\bigoplus$};
    \node[open] (ntr) at (c2) {};
    \node[open_ctr] (nbl) at (t) {};
\end{scope}

\draw[arr] (3.2,-1.6) -- (3.0,-1.4);

\begin{scope}[shift={(1.8,-2.0)}]

    \coordinate (c1) at (0,0);
    \coordinate (c2) at (1,0);
    \coordinate (t) at (0.5,0.75);

    \draw[dotted] (c1)--(c2);
    \draw[dotted] (c1)--(t);
    \draw[dotted] (c2)--(t);

    \node[gate_sq] (ntl) at (c1) {$\mathrm{T}$};
    \node[open] (ntr) at (c2) {};
    \node[gate_sq] (nbl) at (t) {$\mathrm{T}^{\dagger}$};
\end{scope}

\draw[arr] (1.6,-1.6) -- (1.4,-1.4);

\begin{scope}[shift={(0,-2.0)}]

    \coordinate (c1) at (0,0);
    \coordinate (c2) at (1,0);
    \coordinate (t) at (0.5,0.75);

    \draw[edge] (c1)--(c2);
    \draw[dotted] (c1)--(t);
    \draw[dotted] (c2)--(t);

    \node[open_cn] (ntl) at (c1) {$\bigoplus$};
    \node[open_ctr] (ntr) at (c2) {};
    \node[gate_sq] (nbl) at (t) {$\mathrm{H}$};
\end{scope}

\node[] at (0, -2.5) {$|a\rangle$};
\node[] at (1.1,-2.5) {$|b\rangle$};
\node[] at (-0.1, -1.2) {$|ab\rangle$};

\end{tikzpicture}}
		\caption{}
		\label{fig:AmyTD4-2d}
	\end{subfigure}
	\caption{(a) T-depth 4 Toffoli decomposition from~\cite{amy2013ieee}, and (b) its corresponding 2D implementation.}
	\label{fig:td4}
\end{figure}
\begin{figure}[ht]
	\centering
	\begin{subfigure}{0.48\textwidth}
		\centering
		\includegraphics[width=0.9\linewidth]{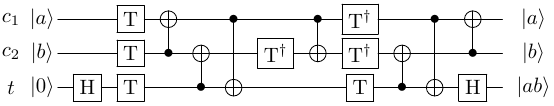}
		\caption{}
		\label{fig:AmyTD3}
	\end{subfigure}\\
	\begin{subfigure}{0.48\textwidth}
		\centering
		\scalebox{0.84}{\begin{tikzpicture}[
open/.style={circle, draw=black, fill = white, thick, minimum size=2.5mm, inner sep=0pt},
open_nt/.style={fill = white, thick, minimum size=3mm, inner sep=0pt},
open_cn/.style={circle, minimum size=2.5mm, inner sep=0pt},
open_ctr/.style={circle, minimum size=2mm,fill = black, inner sep=0pt},
gate_sq/.style={rectangle, draw=black, fill = white, thick, minimum size=3.8mm, inner sep=0pt},
filled/.style={circle, draw=black, thick, fill=black, minimum size=5mm, inner sep=0pt},
  op/.style={circle, draw=black, thick, minimum size=6mm, inner sep=0pt, font=\small},
  edge/.style={black, thick},
  edge_d/.style={black, thick,dashed},
  gate/.style={draw=black, thick, minimum width=6mm, minimum height=6mm, inner sep=1pt, font=\small},
  arr/.style={->, thick}
]

\newcommand{\TriangleGraph}[5]{%
  \begin{scope}[shift={(#1,#5)}]
    \coordinate (c1) at (0,0);
    \coordinate (c2) at (1,0);
    \coordinate (t) at (0.5,0.75);

    \draw[edge] (c1)--(c2);
    \draw[edge] (c1)--(t);
    \draw[edge] (c2)--(t);

    \node[open] (ntl) at (c1) {#2};
    \node[open] (ntr) at (c2) {#3};
    \node[open] (nbl) at (t) {#4};
  \end{scope}
}


\TriangleGraph{0}{}{}{}{0}
\node[font=\small] at (-0.15,-0.45) {$c_1: \,|a\rangle$};
\node[font=\small] at (1.2,-0.45) {$c_2:\,|b\rangle$};
\node[font=\small] at (0.7,1.15) {$t:\,|0\rangle$};
%
\node[] at (1.3, 0.6) {$\mathrm{:}$};

\begin{scope}[shift={(1.8,0)}]
    \coordinate (c1) at (0,0);
    \coordinate (c2) at (1,0);
    \coordinate (t) at (0.5,0.75);

    \draw[dotted] (c1)--(c2);
    \draw[dotted] (c1)--(t);
    \draw[dotted] (c2)--(t);

    \node[open] (nbl) at (c1){};
    \node[open] (nbl) at (c2){};
    \node[gate_sq] (nbl) at (t){$\mathrm{H}$};

\end{scope}


\draw[arr] (3,0.4) -- (3.2,0.6);

\begin{scope}[shift={(3.4,0)}]
    \coordinate (c1) at (0,0);
    \coordinate (c2) at (1,0);
    \coordinate (t) at (0.5,0.75);

    \draw[dotted] (c1)--(c2);
    \draw[dotted] (c2)--(t);
    \draw[dotted] (c1)--(t);

    \node[gate_sq] (ntr) at (t) {$\mathrm{T}$};
    \node[gate_sq] (nbl) at (c1) {$\mathrm{T}$};
    \node[gate_sq] (ntl) at (c2) {$\mathrm{T}$};
\end{scope}

\draw[arr] (4.6,0.4) -- (4.8,0.6);

\begin{scope}[shift={(5,0)}]
    \coordinate (c1) at (0,0);
    \coordinate (c2) at (1,0);
    \coordinate (t) at (0.5,0.75);

    \draw[edge] (c1)--(c2);
    \draw[dotted] (c1)--(t);
    \draw[dotted] (c2)--(t);

    \node[open_cn] (ntl) at (c1) {$\bigoplus$};
    \node[open_ctr] (ntr) at (c2) {};
    \node[open] (nbl) at (t) {};

\end{scope}

\draw[arr] (6.2,0.4) -- (6.4,0.6);

\begin{scope}[shift={(6.6,0)}]
    \coordinate (c1) at (0,0);
    \coordinate (c2) at (1,0);
    \coordinate (t) at (0.5,0.75);

    \draw[edge] (c2)--(t);
    \draw[dotted] (c1)--(c2);
    \draw[dotted] (c1)--(t);

    \node[open_ctr] (ntl) at (t) {};
    \node[open] (ntr) at (c1) {};
    \node[open_cn] (nbl) at (c2) {$\bigoplus$};

\end{scope}

\draw[arr] (7.8,0.4) -- (8,0.6) ;


\begin{scope}[shift={(8.2,0.0)}]
    \coordinate (c1) at (0,0);
    \coordinate (c2) at (1,0);
    \coordinate (t) at (0.5,0.75);

    \draw[dotted] (c1)--(c2);
    \draw[edge] (c1)--(t);
    \draw[dotted] (c2)--(t);

    \node[open_ctr] (nbl) at (c1) {};
    \node[open] (ntr) at (c2) {};
    \node[open_cn] (nbl) at (t) {$\bigoplus$};
\end{scope}

\draw[arr] (8.7,-0.5) -- (8.7,-0.8);

\begin{scope}[shift={(8.2,-2.0)}]
    \coordinate (c1) at (0,0);
    \coordinate (c2) at (1,0);
    \coordinate (t) at (0.5,0.75);

    \draw[dotted] (c1)--(c2);
    \draw[dotted] (c1)--(t);
    \draw[dotted] (c2)--(t);

    \node[open] (ntl) at (c1) {};
    \node[gate_sq] (ntr) at (c2) {$\mathrm{T}^{\dagger}$};
    \node[open] (nbl) at (t) {};
\end{scope}

\draw[arr] (8,-1.6) -- (7.8,-1.4);

\begin{scope}[shift={(6.6,-2.0)}]
    \coordinate (c1) at (0,0);
    \coordinate (c2) at (1,0);
    \coordinate (t) at (0.5,0.75);

    \draw[edge] (c1)--(c2);
    \draw[dotted] (c1)--(t);
    \draw[dotted] (c2)--(t);

    \node[open] (nbl) at (t) {};
    \node[open_ctr] (ntl) at (c1) {};
    \node[open_cn] (ntr) at (c2) {$\bigoplus$};
\end{scope}

\draw[arr] (6.4,-1.6) -- (6.2,-1.4);

\begin{scope}[shift={(5.0,-2.0)}]
    \coordinate (c1) at (0,0);
    \coordinate (c2) at (1,0);
    \coordinate (t) at (0.5,0.75);

    \draw[dotted] (c1)--(c2);
    \draw[dotted] (c1)--(t);
    \draw[dotted] (c2)--(t);

    \node[gate_sq] (ntl) at (c1) {$\mathrm{T}^{\dagger}$};
    \node[gate_sq] (ntr) at (c2) {$\mathrm{T}^{\dagger}$};
    \node[gate_sq] (nbl) at (t) {$\mathrm{T}$};
\end{scope}

\draw[arr] (4.8,-1.6) -- (4.6,-1.4);

\begin{scope}[shift={(3.4,-2.0)}]

    \coordinate (c1) at (0,0);
    \coordinate (c2) at (1,0);
    \coordinate (t) at (0.5,0.75);

    \draw[dotted] (c1)--(c2);
    \draw[dotted] (c1)--(t);
    \draw[edge] (c2)--(t);

    \node[open] (ntl) at (c1) {};
    \node[open_cn] (ntr) at (c2) {$\bigoplus$};
    \node[open_ctr] (nbl) at (t) {};
\end{scope}

\draw[arr] (3.2,-1.6) -- (3.0,-1.4);

\begin{scope}[shift={(1.8,-2.0)}]

    \coordinate (c1) at (0,0);
    \coordinate (c2) at (1,0);
    \coordinate (t) at (0.5,0.75);

    \draw[edge] (c1)--(t);
    \draw[dotted] (c1)--(c2);
    \draw[dotted] (c2)--(t);

    \node[open_ctr] (ntl) at (c1) {};
    \node[open] (ntr) at (c2) {};
    \node[open_cn] (nbl) at (t) {$\bigoplus$};
\end{scope}

\draw[arr] (1.6,-1.6) -- (1.4,-1.4);

\begin{scope}[shift={(0,-2.0)}]

    \coordinate (c1) at (0,0);
    \coordinate (c2) at (1,0);
    \coordinate (t) at (0.5,0.75);

    \draw[edge] (c1)--(c2);
    \draw[dotted] (c1)--(t);
    \draw[dotted] (c2)--(t);

    \node[open_cn] (ntl) at (c1) {$\bigoplus$};
    \node[open_ctr] (ntr) at (c2) {};
    \node[gate_sq] (nbl) at (t) {$\mathrm{H}$};
\end{scope}

\node[] at (0, -2.5) {$|a\rangle$};
\node[] at (1.1,-2.5) {$|b\rangle$};
\node[] at (-0.1, -1.2) {$|ab\rangle$};

\end{tikzpicture}}
		\caption{}
		\label{fig:AmyTD3-2d}
	\end{subfigure}
	\caption{(a) T-depth 3 Toffoli decomposition from~\cite{amy2013ieee}, and (b) its corresponding 2D implementation.}
	\label{fig:td3}
\end{figure}
The nearest-neighbor circuit in FIG.~\ref{fig:AmyTD4-2d} comprises nine layers, of which four contain T gates, giving a T-depth of 4 and a circuit depth of 9. Likewise, in FIG.~\ref{fig:AmyTD3-2d}, the circuit consists of eleven layers, with three containing T gates, yielding a T-depth of 3 and a circuit depth of 11.

\subsection{Basic Toffoli decompositions with ancilla, utilizing seven T gates on diverse 2D topology}
\label{sub:ancilla}
In \cite{amy2013ieee}, the authors also presented a basic Toffoli decomposition using seven T gates with T-depth 2, requiring only one ancilla qubit. In this logical circuit (FIG.~\ref{fig:AmyTD2}), both control qubits connect to the other three qubits, implying degree-3 connectivity in a 2D layout if implemented without SWAP overhead. The target and ancilla qubits connect only to the control qubits, so degree-2 connectivity suffices for them. Consequently, the decomposition can be mapped to a square-grid architecture with diagonal connectivity, as shown in FIG.~\ref{fig:AmyTD2-2d}. The circuit contains 13 layers, of which 2 contain T gates, giving a circuit depth of 13 and T-depth of 2.
\begin{figure}[ht]
	\centering
	\begin{subfigure}{0.48\textwidth}
		\centering
		\includegraphics[width=1\linewidth]{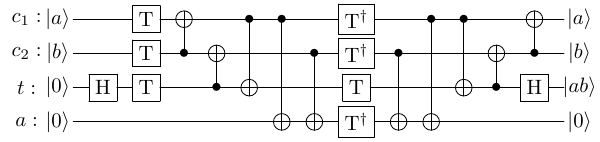}
		\caption{}
		\label{fig:AmyTD2}
	\end{subfigure}\\
	\begin{subfigure}{0.48\textwidth}
		\centering
		\scalebox{0.84}{\input{AmyTD2-2d}}
		\caption{}
		\label{fig:AmyTD2-2d}
	\end{subfigure}
	\caption{(a) T-depth-2 Toffoli decomposition from~\cite{amy2013ieee}, and (b) its corresponding 2D implementation.}
	\label{fig:td2}
\end{figure}

Later in the same year, Selinger~\cite{selinger2013pra} proposed a T-depth-1 Toffoli decomposition using seven T gates and four ancilla qubits (FIG.~\ref{fig:SelTD1}). The required CNOT connectivity among the control qubits ($c_1,c_2$), target ($t$), and ancilla ($a_1,\dots,a_4$) is:
$$\displaylines{c_1: a_1,a_2\quad c_2: a_2,a_3\quad t: a_3,a_4\cr
	a_1: c_1,a_3,a_4\quad a_2: c_1,c_2\quad a_3: c_2,t,a_1\quad a_4: t, a_1.}$$
To achieve this connectivity in a planar architecture, we use a 2D layout formed by combining two square-grid layouts with a triangular layout (FIG.~\ref{fig:SelTD1-2d}). The seven-layer decomposition contains T gates in exactly one layer, yielding a circuit depth of 7 and T-depth of 1.
\begin{figure}[htbp]
	\centering
	\begin{subfigure}{0.48\textwidth}
		\centering
		\includegraphics[width=1\linewidth]{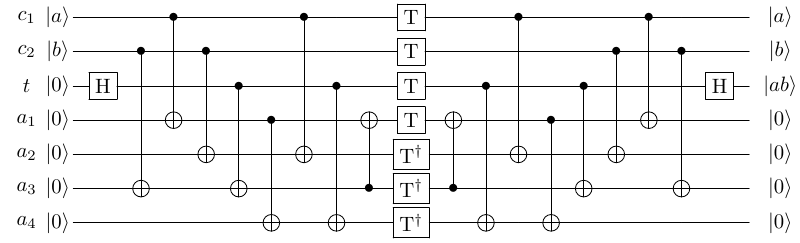}
		\caption{}
		\label{fig:SelTD1}
	\end{subfigure}\\
	\begin{subfigure}{0.48\textwidth}
		\centering
		\scalebox{0.9}{\input{SelTD1-2d}}
		\caption{}
		\label{fig:SelTD1-2d}
	\end{subfigure}
	\caption{(a) T-depth-1 Toffoli decomposition from~\cite{selinger2013pra}, and (b) its corresponding 2D implementation.}
	\label{fig:td1}
\end{figure}

\subsection{Gidney's logical-AND decomposition on linear nearest-neighbor topology}
\label{sub:logicalAND}
In Gidney's logical-AND decomposition \cite{gidney2018quantum} (see FIG.~\ref{fig:logicalAND-wm}), no CNOT gate is applied between the two control qubits; instead, each CNOT involves one of the control qubits and the target. This structure naturally aligns with a linear nearest-neighbor layout, keeping the target qubit in between. This placement ensures that the CNOT gates can be executed without requiring any SWAP gates. Under this arrangement, the logical-AND-based Toffoli decomposition attains a T-depth of 2: one T gate in layer 2, and the other three applied simultaneously on all three qubits in layer 7. The resulting 2D layout implementation of the logical-AND decomposition is illustrated in FIG.~\ref{fig:logicalAND-2d}. Note that, for this 2D realization, we consider only the computation phase of the Toffoli and omit the measurement-based uncomputation.
\begin{figure}[htbp]
	\centering
	\begin{subfigure}{0.48\textwidth}
		\centering
		\includegraphics[width=0.9\linewidth]{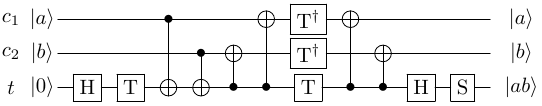}
		\caption{}
		\label{fig:logicalAND-wm}
	\end{subfigure}\\
	\begin{subfigure}{0.48\textwidth}
		\centering
		\scalebox{1}{\begin{tikzpicture}[
  v/.style={circle, draw=black, thick, minimum size=2.5mm, inner sep=1.6pt},
  v_sq/.style={rectangle, draw=black, thick, minimum size=3mm, inner sep=1.6pt},
  v_ctr/.style={circle, draw=black, thick, fill=black, minimum size=2mm, inner sep=1.6pt},
  v_cn/.style={inner sep=0pt},
  v_nt/.style={fill = white, thick, minimum size=2.5mm, inner sep=0pt},
  gate_sq/.style={rectangle, draw=black, fill = white, thick, minimum size=3.8mm, inner sep=0pt},
  e/.style={black, thick},
  arr/.style={->, thick},
  lab/.style={font=\small}
] 

\node[] at (-0.85,1) {$c_1: \ket{a}$};
\node[] at (-0.85,0) {$t: \ket{0}$};
\node[] at (-0.85,-1) {$c_2: \ket{b}$};

\node[v] (a1) at (-0.1, 1) {};
\node[v] (a2) at (-0.1, 0) {};
\node[v] (a3) at (-0.1,-1) {};
\draw[e] (a1)--(a2) (a2)--(a3);
\node[] at (0.4, 0) {$\mathrm{:}$};

\node[v] (b1) at (1, 1) {};
\node[v_sq] (b2) at (1, 0) {$\mathrm{H}$};
\node[v] (b3) at (1, -1) {};
\draw[dotted] (b1) -- (b2);
\draw[dotted] (b3) -- (b2);
\draw[arr] (1.5,0) -- (1.7,0.2);

\node[v] (c1) at (2.2, 1) {};
\node[v_sq] (c2) at (2.2, 0) {$\mathrm{T}$};
\node[v] (c3) at (2.2, -1) {};
\draw[dotted] (c1) -- (c2);
\draw[dotted] (c3) -- (c2);
\draw[arr] (2.7,0) -- (2.9,0.2);

\node[v_ctr] (d1) at (3.4, 1) {};
\node[v_cn]  (d2) at (3.4, 0) {$\bigoplus$};
\node[v] (d3) at (3.4, -1) {};
\draw[e] (d1) -- (d2); 
\draw[dotted] (d3) -- (d2);
\draw[arr] (3.9,0) -- (4.1,0.2);

\node[v_cn] (d1) at (4.6, 0) {$\bigoplus$};
\node[v_ctr]  (d2) at (4.6, -1) {};
\node[v] (d3) at (4.6, 1) {};
\draw[e] (d1) -- (d2); 
\draw[dotted] (d3) -- (d2);
\draw[arr] (5.1,0) -- (5.3,0.2);

\node[v_ctr] (d1) at (5.8, 0) {};
\node[v_cn]  (d2) at (5.8,-1) {$\bigoplus$};
\node[v] (d3) at (5.8, 1) {};
\draw[e] (d1) -- (d2); 
\draw[dotted] (d3) -- (d2);

\draw[arr] (5.8,-1.5) -- (5.8,-1.8);

\node[v_cn] (d1) at (5.8, -2.2) {$\bigoplus$};
\node[v_ctr]  (d2) at (5.8, -3.2) {};
\node[v] (d3) at (5.8, -4.2) {};
\draw[e] (d1) -- (d2); 
\draw[dotted] (d3) -- (d2);

\draw[arr] (5.3,-3.2) -- (5.1,-3);

\node[gate_sq] (d1) at (4.6, -2.2) {$\mathrm{T}^\dagger$};
\node[gate_sq]  (d2) at (4.6, -3.2) {$\mathrm{T}$};
\node[gate_sq]  (d3) at (4.6, -4.2) {$\mathrm{T}^\dagger$};
\draw[dotted] (d1) -- (d2);
\draw[dotted] (d3) -- (d2);

\draw[arr] (4.1,-3.2) -- (3.9,-3);

\node[v_cn] (d1) at (3.4, -2.2) {$\bigoplus$};
\node[v_ctr]  (d2) at (3.4, -3.2) {};
\node[v] (d3) at (3.4, -4.2) {};
\draw[e] (d1) -- (d2); 
\draw[dotted] (d3) -- (d2);

\draw[arr] (2.9,-3.2) -- (2.7,-3);

\node[v_ctr] (d1) at (2.2, -3.2) {};
\node[v_cn]  (d2) at (2.2,-4.2) {$\bigoplus$};
\node[v] (d3) at (2.2, -2.2) {};
\draw[e] (d1) -- (d2); 
\draw[dotted] (d3) -- (d2);

\draw[arr] (1.7,-3.2) -- (1.5,-3);

\node[v_sq] (c2) at (1, -3.2) { $\mathrm{H}$ };
\node[v] (c1) at (1, -2.2) {};
\node[v] (c3) at (1, -4.2) {};
\draw[dotted] (c1) -- (c2);
\draw[dotted] (c3) -- (c2);

\draw[arr] (0.5,-3.2) -- (0.3,-3);

\node[v_sq] (c2) at (-0.2, -3.2) {$\mathrm{S}$};
\node[v] (c1) at (-0.2, -2.2) {};
\node[v] (c3) at (-0.2, -4.2) {};
\draw[dotted] (c1) -- (c2);
\draw[dotted] (c3) -- (c2);

\node[v_nt] (a1) at (-0.8, -2.2) {$\ket{a}$};
\node[v_nt] (a2) at (-0.8, -3.2) {$\ket{ab}$};
\node[v_nt] (a3) at (-0.8,-4.2) {$\ket{b}$};




\end{tikzpicture}}
		\caption{}
		\label{fig:logicalAND-2d}
	\end{subfigure}
	\caption{(a) Logical-AND based Toffoli decomposition from~\cite{gidney2018quantum} (without uncomputation), and (b) its corresponding 2D implementation.}
	\label{fig:logicalAND-htree}
\end{figure}

\subsection{T-depth-1 Toffoli decomposition due to Jaques et al. on square-lattice topology}
\label{sub:jaques}
In 2020, Jaques et al. \cite{jaques2020eurocrypt} proposed a measurement-based Toffoli decomposition (FIG.~\ref{fig:Jaques-wm}) achieving T-depth 1. The construction involves two control qubits $c_1,c_2$, a target qubit $t$, and an ancilla qubit $a$. Similar to the logical-AND circuit, and ignoring the uncomputation part, there is no CNOT interaction between the control qubits, nor between the target and the ancilla. This structural property allows us to place the control qubits $c_1$ and $c_2$ at diagonally opposite corners of the square-grid topology, while assigning the target $t$ and ancilla $a$ to the remaining two corners. The resulting circuit on the square-lattice architecture is shown in FIG.~\ref{fig:Jaques-2d}. Among the eight layers in the decomposition, only layer 4 contains T gates, yielding an overall circuit depth of 8 and a $\mathrm{T}$-depth of 1.
\begin{figure}[htbp]
	\centering
	\begin{subfigure}{0.48\textwidth}
		\centering
		\includegraphics[width=0.9\linewidth]{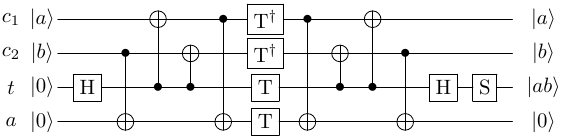}
		\caption{}
		\label{fig:Jaques-wm}
	\end{subfigure}\\
	\begin{subfigure}{0.48\textwidth}
		\centering
		\scalebox{0.8}{\begin{tikzpicture}[
  open/.style={circle, draw=black, fill = white, thick, minimum size=2.5mm, inner sep=0pt},
  open_nt/.style={fill = white, thick, minimum size=3mm, inner sep=0pt},
  open_cn/.style={circle, minimum size=2.5mm, inner sep=0pt},
  open_ctr/.style={circle, minimum size=2mm,fill = black, inner sep=0pt},
  gate_sq/.style={rectangle, draw=black, fill = white, thick, minimum size=3.8mm, inner sep=0pt},
  filled/.style={circle, draw=black, thick, fill=black, minimum size=5mm, inner sep=0pt},
  op/.style={circle, draw=black, thick, minimum size=6mm, inner sep=0pt, font=\small},
  edge/.style={black, thick},
  edge_d/.style={black, thick,dashed},
  gate/.style={draw=black, thick, minimum width=6mm, minimum height=6mm, inner sep=1pt, font=\small},
  arr/.style={->, thick}
]

\newcommand{\SquareGraph}[5]{%
  \begin{scope}[shift={(#1,0)}]
    \coordinate (c1) at (0,1);
    \coordinate (a) at (1,1);
    \coordinate (t) at (0,0);
    \coordinate (c2) at (1,0);

    \draw[edge] (c1)--(a);
    \draw[edge] (c1)--(t);
    \draw[edge] (t)--(c2);
    \draw[edge] (a)--(c2);

    \node[open] (nc1) at (c1) {#2};
    \node[open] (na) at (a) {#3};
    \node[open] (nt) at (t) {#4};
    \node[open] (nc2) at (c2) {#5};
  \end{scope}
}


\SquareGraph{0}{}{}{}{}
\node[] at (-0.2,1.4) {$c_1:\,\ket{a}$};
\node[] at (-0.2,-0.4) {$t:\,\ket{0}$};
\node[] at (1.3,1.4) {$a: \,\ket{0}$};
\node[] at (1.3,-0.4) {$c_2:\,\ket{b}$};
%
\node[] at (1.5, 0.5) {$\mathrm{:}$};
\begin{scope}[shift={(2.2,0)}]
    \coordinate (c1) at (0,1);
    \coordinate (a) at (1,1);
    \coordinate (t) at (0,0);
    \coordinate (c2) at (1,0);

    \draw[dotted] (c1)--(a);
    \draw[dotted] (c1)--(t);
    \draw[dotted] (t)--(c2);
    \draw[dotted] (a)--(c2);

    \node[open] (nc1) at (c1) { };
    \node[open] (na) at (a) { };
    \node[gate_sq] (nt) at (t) {$\mathrm{H}$};
    \node[open] (nc2) at (c2) { };

\end{scope}

\draw[arr] (3.6,0.4) -- (3.8,0.6);

\begin{scope}[shift={(4.2,0)}]
    \coordinate (c1) at (0,1);
    \coordinate (a) at (1,1);
    \coordinate (t) at (0,0);
    \coordinate (c2) at (1,0);

    \draw[dotted] (c1)--(a);
    \draw[edge] (c1)--(t);
    \draw[dotted] (t)--(c2);
    \draw[edge] (a)--(c2);

    \node[open_cn] (nc1) at (c1) {$\bigoplus$};
    \node[open_cn] (na) at (a) {$\bigoplus$};
    \node[open_ctr] (nt) at (t) {};
    \node[open_ctr] (nc2) at (c2) {};
\end{scope}

\draw[arr] (5.6,0.4) -- (5.8,0.6);

\begin{scope}[shift={(6.2,0)}]
    \coordinate (c1) at (0,1);
    \coordinate (a) at (1,1);
    \coordinate (t) at (0,0);
    \coordinate (c2) at (1,0);

    \draw[edge] (c1)--(a);
    \draw[dotted] (c1)--(t);
    \draw[edge] (t)--(c2);
    \draw[dotted] (a)--(c2);

    \node[open_ctr] (nc1) at (c1) {};
    \node[open_cn] (na) at (a) {$\bigoplus$};
    \node[open_ctr] (nt) at (t) {};
    \node[open_cn] (nc2) at (c2) {$\bigoplus$};
\end{scope}

\draw[arr] (7.6,0.4) -- (7.8,0.6);

\begin{scope}[shift={(8.2,0)}]
    \coordinate (c1) at (0,1);
    \coordinate (a) at (1,1);
    \coordinate (t) at (0,0);
    \coordinate (c2) at (1,0);

    \draw[dotted] (c1)--(a);
    \draw[dotted] (c1)--(t);
    \draw[dotted] (t)--(c2);
    \draw[dotted] (a)--(c2);

    \node[gate_sq] (nc1) at (c1) {$\mathrm{T}^{\dagger}$};
    \node[gate_sq] (na) at (a) {$\mathrm{T}$};
    \node[gate_sq] (nt) at (t) {$\mathrm{T}$};
    \node[gate_sq] (nc2) at (c2) {$\mathrm{T}^{\dagger}$};

\end{scope}
\draw[arr] (8.7,-0.4) -- (8.7,-0.7) ;


\begin{scope}[shift={(8.2,-2)}]
    \coordinate (c1) at (0,1);
    \coordinate (a) at (1,1);
    \coordinate (t) at (0,0);
    \coordinate (c2) at (1,0);

    \draw[edge] (c1)--(a);
    \draw[dotted] (c1)--(t);
    \draw[edge] (t)--(c2);
    \draw[dotted] (a)--(c2);

    \node[open_ctr] (nc1) at (c1) {};
    \node[open_cn] (na) at (a) {$\bigoplus$};
    \node[open_ctr] (nt) at (t) {};
    \node[open_cn] (nc2) at (c2) {$\bigoplus$};
\end{scope}

\draw[arr] (7.8,-1.6) -- (7.6,-1.4);

\begin{scope}[shift={(6.2,-2)}]

    \coordinate (c1) at (0,1);
    \coordinate (a) at (1,1);
    \coordinate (t) at (0,0);
    \coordinate (c2) at (1,0);

    \draw[dotted] (c1)--(a);
    \draw[edge] (c1)--(t);
    \draw[dotted] (t)--(c2);
    \draw[edge] (a)--(c2);

    \node[open_cn] (nc1) at (c1) {$\bigoplus$};
    \node[open_cn] (na) at (a) {$\bigoplus$};
    \node[open_ctr] (nt) at (t) {};
    \node[open_ctr] (nc2) at (c2) {};
\end{scope}

\draw[arr] (5.8,-1.6) -- (5.6,-1.4);


\begin{scope}[shift={(4.2,-2)}]
    \coordinate (c1) at (0,1);
    \coordinate (a) at (1,1);
    \coordinate (t) at (0,0);
    \coordinate (c2) at (1,0);

    \draw[dotted] (c1)--(a);
    \draw[dotted] (c1)--(t);
    \draw[dotted] (t)--(c2);
    \draw[dotted] (a)--(c2);

    \node[open] (nc1) at (c1) {};
    \node[open] (na) at (a) {};
    \node[gate_sq] (nt) at (t) {$\mathrm{H}$};
    \node[open] (nc2) at (c2) {};

\end{scope}

\draw[arr] (3.8,-1.6) -- (3.6,-1.4);


\begin{scope}[shift={(2.2,-2)}]
    \coordinate (c1) at (0,1);
    \coordinate (a) at (1,1);
    \coordinate (t) at (0,0);
    \coordinate (c2) at (1,0);

    \draw[dotted] (c1)--(a);
    \draw[dotted] (c1)--(t);
    \draw[dotted] (t)--(c2);
    \draw[dotted] (a)--(c2);

    \node[open] (nc1) at (c1) {};
    \node[open] (na) at (a) {};
    \node[gate_sq] (nt) at (t) {$\mathrm{S}$};
    \node[open] (nc2) at (c2) {};
\end{scope}

\node[] at (1.8,-0.9) {$\ket{a}$};
\node[] at (1.7,-2.3) {$\ket{ab}$};
\node[] at (3.5,-0.8) {$\ket{0}$};
\node[] at (3.5,-2.3) {$\ket{b}$};





\end{tikzpicture}}
		\caption{}
		\label{fig:Jaques-2d}
	\end{subfigure}
	\caption{(a) T-depth-1 Toffoli decomposition from~\cite{jaques2020eurocrypt} (without uncomputation), and (b) its corresponding 2D implementation.}
	\label{fig:jaques}
\end{figure}

\subsection{3-MCT decomposition using 6 T gates on IBM's 5-qubit 2D topologies}
In 2021, Gidney et al.~\cite{gidney2021cccz} presented the first CCCZ circuit requiring only six T gates. As is well known, an $n$-controlled Z gate $\mathrm{C}^n\mathrm{Z}$ gate, combined with Hadamard gates on the target qubit before and after the operation, yields an $n$-controlled Toffoli gate $\mathrm{C}^n\mathrm{X}$~\cite[FIG.~1]{dutta2025pra}. The circuit consists of three control qubits $c_1, c_2, c_3$, a target qubit $t$, and an ancilla $a$. From the logical circuit (FIG.~\ref{fig:cccz}), all CNOT interactions involve the ancilla; no other pair of qubits interacts via CNOT. Consequently, the ancilla requires degree-4 connectivity to avoid SWAP overhead. Furthermore, the post-measurement three-qubit controlled-Z operations act on either ${c_1,c_2, a}$ or ${c_3,t, a}$. This implies that $c_1$ and $c_2$ must be adjacent, each requiring degree-2 connectivity, and likewise $c_3$ and $t$ must be adjacent with 2-degree connectivity. Satisfying all these adjacency constraints is achievable on IBM’s five-qubit layout, shown in FIG.~\ref{fig:cccz-2d}. The corresponding 2D implementation achieves a circuit depth of 18 and a T-depth of 6.
\begin{figure*}[htbp]
	\centering
	\begin{subfigure}{1\textwidth}
		\centering
		\includegraphics[width=0.78\linewidth]{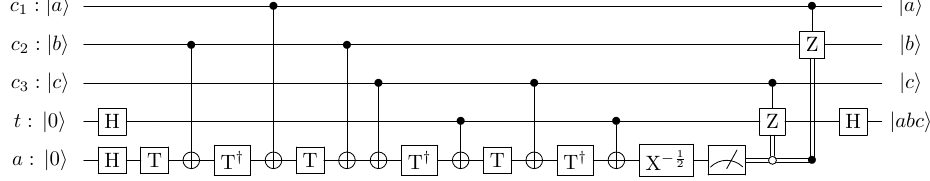}
		\caption{}
		\label{fig:cccz}
	\end{subfigure}
	\begin{subfigure}{1\textwidth}
		\centering
		\scalebox{0.95}{\input{cccz-2d}}
		\caption{}
		\label{fig:cccz-2d}
	\end{subfigure}
	\caption{(a) 6 T-depth 3-MCT decomposition from~\cite{gidney2021cccz}, and (b) its corresponding 2D implementation.}
	\label{fig:cccz5q}
\end{figure*}

Later, Nakanishi et al. proposed a modified CCCZ construction (FIG.~\ref{fig:naka}) with the same interaction pattern but reduced T-depth of 2. Its 2D implementation without SWAPs also fits naturally on the same IBM five-qubit layout, as shown in FIG.~\ref{fig:naka-2d}. The resulting circuit has an overall depth of 19 and a T-depth of 2.
\begin{figure*}[htbp]
	\centering
	\begin{subfigure}{1\textwidth}
		\centering
		\includegraphics[width=0.72\linewidth]{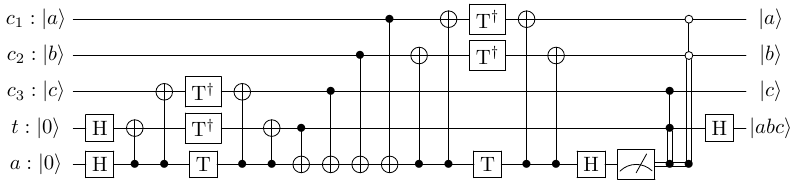}
		\caption{}
		\label{fig:naka}
	\end{subfigure}
	\begin{subfigure}{1\textwidth}
		\centering
		\scalebox{0.93}{\input{naka-2D}}
		\caption{}
		\label{fig:naka-2d}
	\end{subfigure}
	\caption{(a) 2 T-depth 3-MCT decomposition from~\cite{nakanishi2024arxiv}, and (b) its corresponding 2D implementation.}
	\label{fig:naka5q}
\end{figure*}

A comprehensive summary of basic Toffoli decompositions transformed onto various 2D qubit layouts, together with their resource requirements under different metrics, is provided in TABLE~\ref{tab:tof-2d}.
\begin{table*}[htbp]
	\centering
	\begin{tabular}{c c c c c c}
		\hline
		\hline
		~Topology~ & ~Ancilla~ & ~T-Count~ & ~T-Depth~ & ~Circuit depth~ & ~Logical circuit~ \\[0.1cm]
		\hline
		~Triangular (FIG.~\ref{fig:NCTD6-2d})~ & {\color{red}0} & 7 & 6 & 13 & ~FIG.~\ref{fig:NCTD6}~\\
		~Triangular (FIG.~\ref{fig:AmyTD4-2d})~ & {\color{red}0} & 7 & 4 & 9 & ~FIG.~\ref{fig:AmyTD4}~\\
		~Triangular (FIG.~\ref{fig:AmyTD3-2d})~ & {\color{red}0} & 7 & 3 & 11 & ~FIG.~\ref{fig:AmyTD3}~\\
		~Square with a diagonal (FIG.~\ref{fig:AmyTD2-2d})~ & 1 & 7 & 2 & 13 & ~FIG.~\ref{fig:AmyTD2}~\\
		~Square + Triangular (FIG.~\ref{fig:SelTD1-2d})~ & 4 & 7 & {\color{teal}1} & 7 & ~FIG.~\ref{fig:SelTD1}~\\
		~Linear (FIG.~\ref{fig:logicalAND-2d})~ & {\color{red}0} & {\color{blue}4} & $1 + 1$ & 11 & ~FIG.~\ref{fig:logicalAND-wm}~\\
		~Square (FIG.~\ref{fig:Jaques-2d})~ & 1 & {\color{blue}4} & {\color{teal}1} & 8 & ~FIG.~\ref{fig:Jaques-wm}~\\[2pt]
		~IBM's 5-qubit (FIG.~\ref{fig:cccz-2d})~ & 1 & 6 & 6 & 18 & ~FIG.~\ref{fig:cccz}~\\
		~IBM's 5-qubit (FIG.~\ref{fig:naka-2d})~ & 1 & 6 & 2 & 20 & ~FIG.~\ref{fig:naka}~\\
		\hline
	\end{tabular}
	\caption{Mapping state-of-the-art basic Toffoli decompositions in different 2D qubit layouts. The circuit with the lowest T-count is highlighted in {\color{blue}blue}, the one with the least T-depth in {\color{teal}teal}, and the ancilla-free decompositions in {\color{red}red}. The last two rows correspond to the 3-MCT decompositions in the 5-qubit 2D layout provided by IBM.}
	\label{tab:tof-2d}
\end{table*}

\section{Optimal Toffoli-depth MCT decompositions on infinite grid 2D layouts without SWAP}
\label{sec:cont1}
In this section, we explore optimal Toffoli-depth MCT decompositions~\cite{dutta2025pra} tailored to various 2D qubit layouts. Recall from~\cite{dutta2025pra} that an optimal Toffoli-depth $n$-MCT decomposition requires $n-2$ ancilla qubits and $n-1$ Toffoli gates, achieving a Toffoli depth of $\lceil \log_2 n \rceil$. The overall T-count and T-depth depend on the specific basic Toffoli decomposition employed from Section.~\ref{sec:warmup}.

\subsection{MCT decompositions on triangular topologies}
\label{sub:mct-triangular}
The optimal Toffoli-depth $n$-MCT decompositions on an infinite grid triangular 2D layout for $n = 3,4,5,6,7,8$ are shown in FIG.~\ref{fig:mct-triangle}.
\begin{figure}[htbp]
	\centering
	\begin{subfigure}{0.12\textwidth}
		\centering
		\includegraphics[width=1\linewidth]{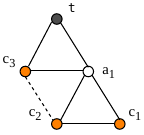}
		\caption{}
		\label{fig:t3}
	\end{subfigure}
	\begin{subfigure}{0.13\textwidth}
		\centering
		\includegraphics[width=1\linewidth]{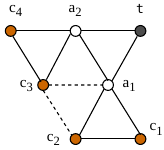}
		\caption{}
		\label{fig:t4}
	\end{subfigure}
	\begin{subfigure}{0.17\textwidth}
		\centering
		\includegraphics[width=1\linewidth]{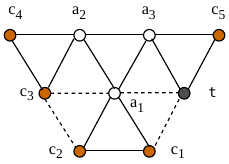}
		\caption{}
		\label{fig:t5}
	\end{subfigure}\\
	\begin{subfigure}{0.22\textwidth}
		\centering
		\includegraphics[width=1\linewidth]{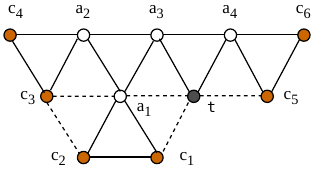}
		\caption{}
		\label{fig:t6}
	\end{subfigure}
	\begin{subfigure}{0.24\textwidth}
		\centering
		\includegraphics[width=1\linewidth]{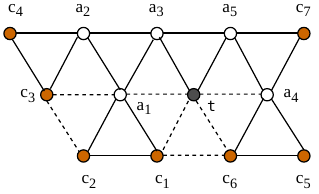}
		\caption{}
		\label{fig:t7}
	\end{subfigure}\\
	\begin{subfigure}{0.3\textwidth}
		\centering
		\includegraphics[width=1\linewidth]{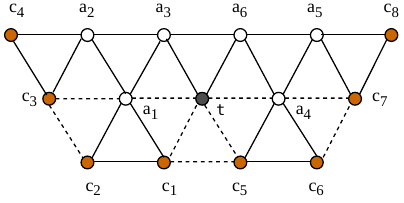}
		\caption{}
		\label{fig:t8}
	\end{subfigure}
	\caption{$n$-MCT circuit decompositions on triangular 2D qubit layouts for $n = 3, 4, 5, 6, 7, 8$. The Brown circles denote control qubits, white circles denote ancilla for MCT decompositions, and the gray circle denotes the target qubit.} 
	\label{fig:mct-triangle}
\end{figure}

In FIG.~\ref{fig:mct-triangle}, the number of solid-edged triangles corresponds to the number of Toffoli gates. Since the T-depth-3 construction of Amy et al.~(FIG.~\ref{fig:AmyTD3-2d}) fits the triangular topology without any SWAP overhead, further decomposition of each Toffoli gate using this construction yields an overall T-count seven times the Toffoli count and a T-depth of $3\lceil \log_2 n \rceil$. Although the construction assumes an infinite grid, 2D triangular topology, the layouts in FIG.~\ref{fig:mct-triangle} preserve the qubit count of the corresponding logical circuits: $n+1$ working qubits and $n-2$ ancilla, totaling $2n-1$ qubits. Complete resource estimates for different values of $n$ are given in TABLE~\ref{tab:mct-triangle}.
\begin{table}[htbp]
	\centering
	\setlength{\tabcolsep}{0pt}
	\begin{tabular}{c c c c c c}
		\hline
		\hline
		~$n$~ & ~\#Qubit~ & \#Toffoli~ & Toffoli depth & ~T-Count & ~T-Depth~ \\[0.1cm]
		\hline
		~3~ & 5 & 2 & 2 & 14 & 6\\
		~4~ & 7 & 3 & 2 & 21 & 6\\
		~5~ & 9 & 4 & 3 & 28 & 9\\
		~6~ & 11 & 5 & 3 & 35 & 9\\
		~7~ & 13 & 6 & 3 & 42 & 9\\
		~8~ & 15 & 7 & 3 & 49 & 9\\
		~$n$~ & $2n-1$ & $n-1$ & $\lceil \log_2 n\rceil$ & $7(n-1)$ & $3\lceil \log_2 n\rceil$\\
		\hline
	\end{tabular}
	\caption{Quantum resource requirements for $n$-MCT decompositions on triangular 2D layouts, using the T-depth-3 basic Toffoli decomposition shown in FIG.~\ref{fig:AmyTD3-2d}.}
	\label{tab:mct-triangle}
\end{table}

Depending on design requirements, other SWAP-free Toffoli decompositions compatible with the triangular 2D topology, such as those in FIG.~\ref{fig:NCTD6-2d} and FIG.~\ref{fig:AmyTD4-2d}, may also be used for the final decomposition.

\subsection{MCT decomposition on square-grid topologies}
\label{sub:mct-square}
The optimal Toffoli-depth $n$-MCT decompositions on an infinite 2D square-grid architecture for $n \in\{3,4,5,6,7,8\}$ are shown in FIG.~\ref{fig:mct-square}.
\begin{figure}[!htbp]
	\centering
	\begin{subfigure}{0.14\textwidth}
		\centering
		\includegraphics[width=1\linewidth]{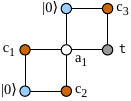}
		\caption{}
		\label{fig:s3}
	\end{subfigure}\quad\quad\quad
	\begin{subfigure}{0.18\textwidth}
		\centering
		\includegraphics[width=1\linewidth]{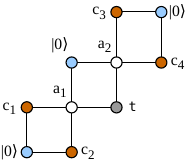}
		\caption{}
		\label{fig:s4}
	\end{subfigure}\\[5pt]
	\begin{subfigure}{0.16\textwidth}
		\centering
		\includegraphics[width=1\linewidth]{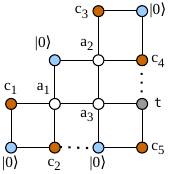}
		\caption{}
		\label{fig:s5}
	\end{subfigure}\quad\,\,
	\begin{subfigure}{0.21\textwidth}
		\centering
		\includegraphics[width=1\linewidth]{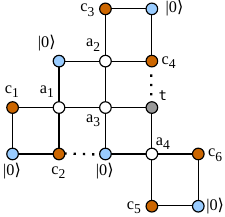}
		\caption{}
		\label{fig:s6}
	\end{subfigure}\\[4pt]
	\begin{subfigure}{0.25\textwidth}
		\centering
		\includegraphics[width=1\linewidth]{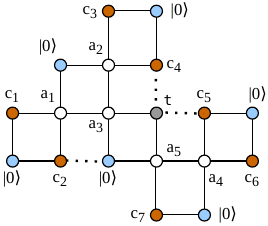}
		\caption{}
		\label{fig:s7}
	\end{subfigure}\\[5pt]
	\begin{subfigure}{0.26\textwidth}
		\centering
		\includegraphics[width=1\linewidth]{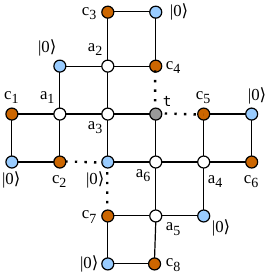}
		\caption{}
		\label{fig:s8}
	\end{subfigure}
	\caption{$n$-MCT circuit decomposition on square-grid 2D layouts for $n = 3, 4, 5, 6, 7, 8$. Here, the brown circles denote control qubits, white circles denote ancilla for MCT decompositions, blue circles denotes ancilla used for basic Toffoli decompositions, and the gray circle denotes the target qubit.}
	\label{fig:mct-square}
\end{figure}

In FIG.~\ref{fig:mct-square}, the number of solid-edged squares corresponds to the number of Toffoli gates. Because the T-depth-1 Toffoli decomposition of Jaques et al.~\cite{jaques2020eurocrypt} (see FIG.~\ref{fig:Jaques-2d}) fits within the square-grid architecture without requiring any SWAP operations, the overall T-depth also remains $\lceil \log_2 n \rceil$, and the resulting T-count becomes four times the Toffoli count. As in the triangular infinite-grid implementations, we constrain the available qubits to match the logical circuits: $n+1$ working qubits, $n-2$ ancilla for the binary-tree-based decomposition, and $n-1$ ancilla for the T-depth-1 variant, for a total of $3n-2$ qubits. Complete resource estimates for the decomposition of $n$-MCT gate on an infinite-grid triangular topology for various values of $n$ are presented in TABLE~\ref{tab:mct-square}.
\begin{table}[htbp]
	\centering
	\setlength{\tabcolsep}{0pt}
	\begin{tabular}{c c c c c c}
		\hline
		\hline
		~$n$~ & ~\#Qubit~ & \#Toffoli~ & Toffoli depth & ~T-Count & ~T-Depth~ \\[0.1cm]
		\hline
		~3~ & 7 & 2 & 2 & 8 & 2\\
		~4~ & 10 & 3 & 2 & 12 & 2\\
		~5~ & 13 & 4 & 3 & 16 & 3\\
		~6~ & 16 & 5 & 3 & 20 & 3\\
		~7~ & 19 & 6 & 3 & 24 & 3\\
		~8~ & 12 & 7 & 3 & 28 & 3\\
		~$n$~ & $3n-2$ & $n-1$ & $\lceil \log_2 n\rceil$ & $4(n-1)$ & $\lceil \log_2 n\rceil$\\
		\hline
	\end{tabular}
	\caption{Quantum resource requirements for $n$-MCT decompositions on 2D square-grid layouts, using the T-depth-1 basic Toffoli decomposition shown in FIG.~\ref{fig:Jaques-2d}.}
	\label{tab:mct-square}
\end{table}

Additionally, for basic Toffoli decompositions, one may also consider the T-depth-2 construction by Amy et al.~\cite{amy2013ieee}, which can be implemented on a square-grid layout (without SWAP) with added diagonal connectivity, as shown in FIG.~\ref{fig:AmyTD2-2d}. The corresponding 8-MCT decomposition is presented in FIG.~\ref{fig:mct-sqd}, and the associated quantum resource estimates are summarized in TABLE~\ref{tab:mct-sqd}.
\begin{figure}[htbp]
	\centering
	\includegraphics[width=0.54\linewidth]{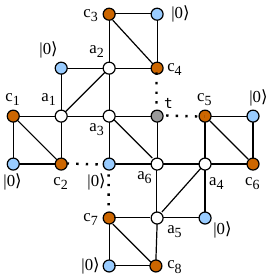}
	\caption{$8$-MCT circuit decomposition on 2D square grid with a diagonal connectivity. Brown circles denote control qubits, white circles denote ancilla used for MCT decompositions, blue circles denotes ancilla used for basic Toffoli decompositions, and the gray circle denotes the target qubit.}
	\label{fig:mct-sqd}
\end{figure}
\begin{table}[htbp]
	\centering
	\setlength{\tabcolsep}{0pt}
	\begin{tabular}{c c c c c c}
		\hline
		\hline
		~$n$~ & ~\#Qubit~ & \#Toffoli~ & Toffoli depth & ~T-Count & ~T-Depth~ \\[0.1cm]
		\hline
		~3~ & 7 & 2 & 2 & 14 & 4\\
		~4~ & 10 & 3 & 2 & 21 & 4\\
		~5~ & 13 & 4 & 3 & 28 & 6\\
		~6~ & 16 & 5 & 3 & 35 & 6\\
		~7~ & 19 & 6 & 3 & 42 & 6\\
		~8~ & 12 & 7 & 3 & 49 & 6\\
		~$n$~ & $3n-2$ & $n-1$ & $\lceil \log_2 n\rceil$ & $7(n-1)$ & $2\lceil \log_2 n\rceil$\\
		\hline
	\end{tabular}
	\caption{Quantum resource requirements for $n$-MCT decompositions on 2D square-grid layouts, using the T-depth-2 basic Toffoli decomposition shown in FIG.~\ref{fig:AmyTD2-2d}.}
	\label{tab:mct-sqd}
\end{table}

\subsection{MCT decompositions on H-tree topologies}
\label{sub:mct-htree}
In \cite{hu2019dac}, the authors presented $n$-MCT decompositions on an H-tree layout using relative-phase Toffoli gates. Here, we provide $n$-MCT decompositions on an infinite-grid H-tree layout for $n=3,4,5,6,7,8$, as shown in FIG.~\ref{fig:mct-htree}.
\begin{figure*}[htbp]
	\centering
	\begin{subfigure}{0.12\textwidth}
		\centering
		\includegraphics[width=0.9\linewidth]{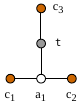}
		\caption{}
		\label{fig:h3}
	\end{subfigure}\quad\quad\quad
	\begin{subfigure}{0.12\textwidth}
		\centering
		\includegraphics[width=0.9\linewidth]{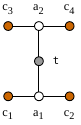}
		\caption{}
		\label{fig:h4}
	\end{subfigure}\quad\quad\quad
	\begin{subfigure}{0.22\textwidth}
		\centering
		\includegraphics[width=0.9\linewidth]{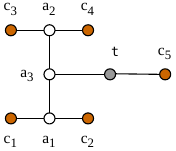}
		\caption{}
		\label{fig:h5}
	\end{subfigure}\quad\quad
	\begin{subfigure}{0.25\textwidth}
		\centering
		\includegraphics[width=0.9\linewidth]{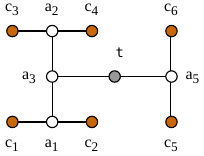}
		\caption{}
		\label{fig:h6}
	\end{subfigure}\\[8pt]
	\begin{subfigure}{0.32\textwidth}
		\centering
		\includegraphics[width=0.9\linewidth]{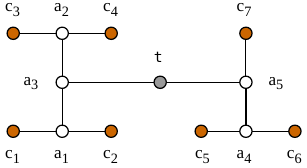}
		\caption{}
		\label{fig:h7}
	\end{subfigure}\quad\quad\quad\quad\quad
	\begin{subfigure}{0.34\textwidth}
		\centering
		\includegraphics[width=0.9\linewidth]{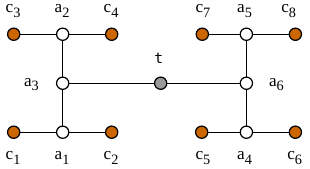}
		\caption{}
		\label{fig:h8}
	\end{subfigure}
	\caption{$n$-MCT circuit decompositions on H-tree layouts for $n = 3, 4, 5, 6, 7, 8$. Brown circles denote control qubits, white circles denote ancilla qubits, and the grey circle denotes the target qubit.}
	\label{fig:mct-htree}
\end{figure*}

As the Gidney’s logical-AND construction \cite{gidney2018quantum} naturally aligns with the H-tree layout, we use this for further decomposition of the basic Toffoli gates (see FIG.~\ref{fig:logicalAND-2d}). This yields an overall T-depth of $\lceil\log_2 n\rceil +1$, which is lower than that in \cite{hu2019dac}. In FIG.~\ref{fig:mct-htree}, the number of straight lines corresponds to the Toffoli count, and therefore the T-count is four times the Toffoli count. The qubit requirements match the logical circuit, consisting of $n+1$ working qubits and $n-2$ ancilla for the optimal-depth decomposition. TABLE~\ref{tab:mct-htree} summarizes the quantum resource requirements for implementing $n$-MCT gates on the H-tree topology using the logical-AND based decomposition for $n=3,4,5,6,7,8$.
\begin{table}[htbp]
	\centering
	\setlength{\tabcolsep}{0pt}
	\begin{tabular}{c c c c c c}
		\hline
		\hline
		~$n$~ & ~\#Qubit~ & \#Toffoli~ & Toffoli depth & ~T-Count & ~T-Depth~ \\[0.1cm]
		\hline
		~3~ & 5 & 2 & 2 & 8 & 3\\
		~4~ & 7 & 3 & 2 & 12 & 3\\
		~5~ & 9 & 4 & 3 & 16 & 4\\
		~6~ & 11 & 5 & 3 & 20 & 4\\
		~7~ & 13 & 6 & 3 & 24 & 4\\
		~8~ & 15 & 7 & 3 & 28 & 4\\
		~$n$~ & $2n-1$ & $n-1$ & $\lceil \log_2 n\rceil$ & $4(n-1)$ & $\lceil \log_2 n\rceil +1$\\
		\hline
	\end{tabular}
	\caption{Quantum resource requirements for MCT decompositions on 2D H-tree layouts, using the logical-AND based basic Toffoli decomposition shown in FIG.~\ref{fig:logicalAND-2d}.}
	\label{tab:mct-htree}
\end{table}

\subsection{Comparison of quantum resource-requirements for MCT decomposition in different 2D layouts}
\label{sub:mct-comparison}
Here, we compare the quantum resource requirements for MCT decompositions across different qubit layouts, including triangular, square-grid, square-grid with a diagonal, and H-tree networks. Earlier, we discussed the resource requirements for MCT-to-Toffoli and Toffoli-to-T decompositions in different 2D topologies and summarized estimates for both specific instances and the general $n$-qubit case. TABLE \ref{tab:mct-comparison} now provides a comparison of these resource costs across different 2D topologies in terms of generic $n$.
\begin{table*}[htbp]
	\centering
	\begin{tabular}{c c c c c c}
		\hline
		\hline
		~Topology~ & ~\#Qubit~ & ~\#Toffoli~ & ~Toffoli depth~ & ~T-Count~ & ~T-Depth~\\[0.1cm]
		\hline
		Triangular & $2n-1$ & $n-1$ & $\lceil \log_2 n\rceil$ & $7(n-1)$ & $3\lceil \log_2 n\rceil$\\
		Square-grid & $3n-2$ & $n-1$ & $\lceil \log_2 n\rceil$ & $4(n-1)$ & $\lceil \log_2 n\rceil$\\
		~Square-grid with a diagonal~ & $3n-2$ & $n-1$ & $\lceil \log_2 n\rceil$ & $7(n-1)$ & $2\lceil \log_2 n\rceil$\\
		H-tree & $2n-1$ & $n-1$ & $\lceil \log_2 n\rceil$ & $4(n-1)$ & ~$\lceil \log_2 n\rceil +1$~\\
		\hline
	\end{tabular}
	\caption{Quantum resource requirements for $n$-MCT decompositions on different 2D layouts.}
	\label{tab:mct-comparison}
\end{table*}

From TABLE \ref{tab:mct-comparison}, it is evident that under an infinite-grid assumption, the minimum T-depth is achieved in square-grid layouts using the basic Toffoli decomposition of Jaques et al.~\cite{jaques2020eurocrypt}. In this setting, the total number of required qubits, including working and ancilla qubits, is $3n - 2$. However, when mapping the logical circuits to an H-tree layout using the logical-AND decomposition of Gidney~\cite{gidney2018quantum}, the minimum achievable T-depth increases by 1, while the qubit count decreases significantly to $2n - 1$. Note that a similar observation also holds for MCT decompositions in logical circuits with $(n-1)$-degree connectivity among qubits. Consequently, in the following section, where we assume restricted 2D qubit layouts, we focus only on those architectures in which the basic 4-T Toffoli decompositions from \cite{gidney2018quantum, jaques2020eurocrypt} can be implemented without SWAP.

\section{Bounding depth overhead in constrained topologies}
\label{sec:depth_bound}

In the previous section, we presented MCT decompositions assuming architectures with unrestricted topology size. In practice, however, the qubit layouts are fixed in size, and mapping an MCT decomposition onto such an architecture requires routing qubits to establish the necessary neighborhood interactions. This routing is typically achieved by inserting SWAP gates, which introduce additional circuit depth. In this section, we explore efficient qubit-placement strategies for the binary-tree-based $n$-MCT decomposition proposed in \cite{dutta2025pra} across different qubit layouts.

Routing in a quantum architecture aims to bring interacting qubits together while maximizing parallelism, thereby minimizing the physical depth corresponding to a given logical circuit. Mapping an $n$-MCT gate involves decomposing it into multiple Toffoli gates and subsequently placing and routing these gates within the target architecture. Consequently, the overall circuit depth depends critically on the chosen placement and routing strategy.

We further study packing strategies for embedding Toffoli gates into a given topology. To this end, we develop a refined bounding framework based on geometrically embedding the unit graphs introduced in Section~\ref{sec:cont1} into standard topologies via disjoint packing.

Before proceeding further, we fix the notation for the parameters under consideration. We consider an $n$-MCT decomposition on a topology $\mathcal{T}$. The total number of Toffoli gates in the decomposition is denoted by $T_g(n)$, while the number of Toffoli gates in logical layer $\ell$ is denoted by $g_\ell$. Further, $D_{\mathrm{logical}}$ denotes the Toffoli depth of the corresponding logical decomposition, which is $\lceil \log_2 n\rceil$ for an optimal depth decomposition following \cite{dutta2025pra}.

Additionally, the decomposition induces an interaction graph $M$, referred to as a motif, arising from the Toffoli-to-T decomposition structures (see FIG.~\ref{fig:logicalAND-2d} and FIG.~\ref{fig:Jaques-2d}). We denote by $P_M(\mathcal{T})$ the maximum number of vertex-disjoint embeddings of $M$ in $\mathcal{T}$. Finally, $D_{\pi}(\mathcal{T})$ denotes the upper bound on the permutation-routing depth of the topology $\mathcal{T}$. 

In this regard, we proceed with the following lemma.

\begin{lemma}
	\label{lem:lemma1}
	Let $D_M(\mathcal{T})$ denote the depth overhead for mapping an $n$-MCT decomposition onto a topology $\mathcal{T}$. Then
	$$D_M(\mathcal{T}) < D_{\pi}(\mathcal{T})
	\left[
	\frac{T_g(n)}{P_M(\mathcal{T})} +
	D_{\mathrm{logical}}
	\right].$$
\end{lemma}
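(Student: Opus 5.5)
We would prove the bound by exhibiting an explicit layer-by-layer schedule and counting its routing phases. The logical decomposition of~\cite{dutta2025pra} has $D_{\mathrm{logical}}$ layers, and layer $\ell$ contains $g_\ell$ pairwise qubit-disjoint Toffoli gates, with $\sum_\ell g_\ell = T_g(n)$. Fix once and for all a family of $P_M(\mathcal{T})$ vertex-disjoint embeddings of the motif $M$ in $\mathcal{T}$. Each embedded copy is a region where one basic Toffoli can be executed SWAP-free, using the 4-T constructions of FIG.~\ref{fig:logicalAND-2d} or FIG.~\ref{fig:Jaques-2d}. By Section~\ref{sec:cont1}, such an execution contributes only the native Toffoli-to-T depth, which is already counted in the logical cost. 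The overhead $D_M(\mathcal{T})$ is therefore only the depth of the SWAP layers we insert.

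Within layer $\ell$, we partition the $g_\ell$ gates into $\lceil g_\ell / P_M(\mathcal{T})\rceil$ batches, each containing at most $P_M(\mathcal{T})$ gates. Before each batch, we apply a single permutation of all logical qubits. This permutation sends the qubits of every gate in the batch onto the vertices of a distinct motif copy, with controls, target and ancilla placed in the roles required by $M$. The remaining qubits are placed arbitrarily. Because any permutation of the qubits on $\mathcal{T}$ can be realised by SWAP layers of depth at most $D_\pi(\mathcal{T})$, each batch costs at most $D_\pi(\mathcal{T})$ overhead. The gates of a batch act on disjoint qubits and occupy disjoint motifs, so they run in parallel without further SWAPs. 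Ancilla reuse and the measurement-based uncomputation do not affect this, because the qubits an uncomputation needs are exactly those of the corresponding computation gate. Those qubits stay in place, or can be re-gathered within the same kind of routing phase.

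Summing over all batches gives
$$D_M(\mathcal{T}) \le D_\pi(\mathcal{T}) \sum_{\ell=1}^{D_{\mathrm{logical}}} \left\lceil \frac{g_\ell}{P_M(\mathcal{T})}\right\rceil.$$
We then use $\lceil x\rceil < x+1$ in each term, which yields $\sum_\ell \lceil g_\ell/P_M\rceil < T_g(n)/P_M(\mathcal{T}) + D_{\mathrm{logical}}$. The inequality is strict because each of the $D_{\mathrm{logical}}\ge 1$ terms satisfies it strictly. This is exactly the claimed bound.

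We expect the main obstacle to be justifying that one permutation per batch is enough and that every qubit involved is free to move. Specifically, the target of a layer-$\ell$ gate is a control of a layer-$(\ell+1)$ gate, so it must remain intact while being routed. SWAPs move a qubit's state rather than destroying it, so this causes no difficulty. We also need the topology to have at least $q_{\mathrm{req}}$ vertices, which is the minimum-size assumption $r_{\min}$, so that the permutation is well defined. Finally, $D_\pi(\mathcal{T})$ must be interpreted as a worst-case bound over all permutations, such as the routing number $rt(\mathcal{T})$ of~\cite{alon1994siam}. With this interpretation, the batch-routing argument applies regardless of which motif copies are chosen.
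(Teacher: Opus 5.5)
Your proof follows the paper's own argument: split layer $\ell$ into $\lceil g_\ell/P_M(\mathcal{T})\rceil$ batches, charge one permutation-routing phase of depth at most $D_\pi(\mathcal{T})$ per batch, and apply $\lceil x\rceil < x+1$ termwise. Your extra bookkeeping (a fixed family of disjoint embeddings, $D_\pi$ read as a worst-case routing number, $D_{\mathrm{logical}}\ge 1$) only makes explicit what the paper leaves implicit. Carrying the strict inequality through the factor $D_\pi(\mathcal{T})$ also tacitly requires $D_\pi(\mathcal{T})>0$, in your version as in the paper's.

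Your aside on measurement-based uncomputation is unnecessary and not fully justified. The paper's overhead counts only the computation phase; uncomputation is explicitly omitted in Section~\ref{sub:logicalAND}. If uncomputation were counted, your argument would not cover it. By the time an AND gate is uncomputed, its target qubit has generally been moved by later routing phases. Re-gathering its qubits would therefore cost extra routing phases that the bound does not include.
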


\begin{proof}
	Let $g_\ell$ denote the number of Toffoli gates in logical layer $\ell$. Since at most $P_M(\mathcal{T})$ Toffoli gates can be executed in parallel in a single placement step, a layer containing $g_\ell$ many Toffoli gates must be partitioned into $\left\lceil g_\ell/ P_M(\mathcal{T}) \right\rceil$ batches. Each batch incurs a routing cost of at most $D_\pi(\mathcal{T})$. Therefore, we obtain
	\begin{equation}
		D_M(\mathcal{T}) \le D_{\pi}(\mathcal{T})\cdot
		\sum_{\ell=1}^{D_{\mathrm{logical}}}
		\left\lceil \frac{g_\ell}{P_M(\mathcal{T})} \right\rceil.
		\label{eqn:gen_form}
	\end{equation}
	Using the inequality $\lceil x \rceil < x+1$, we obtain
	\begin{align*}
		\sum_{\ell=1}^{D_{\mathrm{logical}}}
		\left\lceil
		\frac{g_\ell}{P_M(\mathcal{T})}
		\right\rceil
		&<
		\sum_{\ell=1}^{D_{\mathrm{logical}}}
		\left(
		\frac{g_\ell}{P_M(\mathcal{T})}+1
		\right)\\
		&=
		\left(\frac{1}
		{P_M(\mathcal{T})}\sum_{\ell=1}^{D_{\mathrm{logical}}} g_\ell\right) + D_{\mathrm{logical}}\\
		&=
		\frac{T_g(n)}{P_M(\mathcal{T})} + D_{\mathrm{logical}}.
	\end{align*}
	Substituting this in Eq. \ref{eqn:gen_form}, we obtain
	$$D_M(\mathcal{T}) < D_{\pi}(\mathcal{T})
	\left[
	\frac{T_g(n)}{P_M(\mathcal{T})} + D_{\mathrm{logical}}
	\right].$$
	Furthermore, if $g_\ell \le P_M(\mathcal{T})$ for every logical layer $\ell$, then each layer can be executed in a single batch, and (\ref{eqn:gen_form}) simplifies to
	$D_M(\mathcal{T})
	\le
	D_{\pi}(\mathcal{T})\cdot D_{\mathrm{logical}}.$
\end{proof}

\begin{figure*}[htbp]
	\centering
	\begin{subfigure}{0.32\textwidth}
		\centering
		\includegraphics[scale=1]{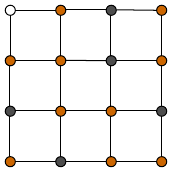}
		\caption{$P_3$ packing on 2D square-grid topology.}
		\label{fig:p3_grid_packing}
	\end{subfigure}
	\hfill
	\begin{subfigure}{0.32\textwidth}
		\centering
		\includegraphics[scale=1]{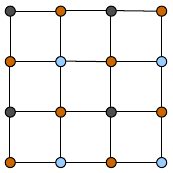}
		\caption{$C_4$ packing on 2D square-grid topology.}
		\label{fig:c4_grid_packing}
	\end{subfigure}
	\hfill
	\begin{subfigure}{0.32\textwidth}
		\centering
		\includegraphics[scale=1]{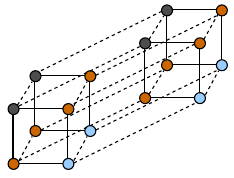}
		\caption{$C_4$ packing on Hypercube topology.}
		\label{fig:c4_hypercube_packing}
	\end{subfigure}
	\caption{Examples of $P_3$ and $C_4$ packings on different topologies.}
	\label{fig:packing_comparison}
\end{figure*}

\subsection{Depth overhead bound for logical-AND based decomposition}
\label{subsec:gidney_packing}
The decomposition shown in FIG.~\ref{fig:logicalAND-2d} corresponds to a $P_3$ subgraph, one of the most common motifs encountered in practical qubit topologies.

\begin{proposition}
	\label{prop:prop1}
	Let $\mathcal{G}=(V, E)$ be a graph that admits a Hamiltonian path. Then the maximum packing number of $P_3$ in $\mathcal{G}$ satisfies $P(\mathcal{G})\ge \lfloor |V|/3\rfloor$. Moreover, this bound is tight. Consequently, the maximum number of vertex-disjoint copies of $P_3$ in $\mathcal{G}$ is $\lfloor |V|/3\rfloor$.
\end{proposition}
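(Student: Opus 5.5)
The plan has two halves: a lower bound from cutting the Hamiltonian path, and an upper bound from counting vertices.

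For the lower bound, fix a Hamiltonian path $v_1 v_2 \cdots v_{|V|}$ of $\mathcal{G}$, which exists by hypothesis. Then cut this path into consecutive triples
$$(v_1,v_2,v_3),\ (v_4,v_5,v_6),\ \dots,\ (v_{3k-2},v_{3k-1},v_{3k}), \qquad k=\lfloor |V|/3\rfloor .$$
Each triple $v_{3i-2}v_{3i-1}v_{3i}$ uses two consecutive path edges, which are edges of $\mathcal{G}$. So each triple spans a copy of $P_3$, with the middle vertex $v_{3i-1}$ as the center. The triples are pairwise vertex-disjoint because they are disjoint index blocks. Hence $P(\mathcal{G})\ge k=\lfloor |V|/3\rfloor$. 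At most two vertices, $v_{3k+1}$ and $v_{3k+2}$, are left unused.

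For tightness and the equality claim, I use a counting argument. Any family of vertex-disjoint copies of $P_3$ uses exactly $3$ distinct vertices per copy. Therefore $3\,P(\mathcal{G})\le |V|$, and since $P(\mathcal{G})$ is an integer, $P(\mathcal{G})\le\lfloor |V|/3\rfloor$. Combined with the lower bound, this gives $P(\mathcal{G})=\lfloor |V|/3\rfloor$ for every graph with a Hamiltonian path. The lower bound is therefore attained, and in fact it is attained for every such $\mathcal{G}$, not just some extremal example. I would note explicitly that the path $P_{|V|}$ itself is the simplest witness.

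No step is a real obstacle. The one point needing care is the reading of the statement: "tight" should be interpreted as the lower bound matching the trivial upper bound, which is what makes the final sentence, that the maximum number of copies equals $\lfloor |V|/3\rfloor$, follow. I would also remark briefly that the square-grid, hypercube and H-tree-like layouts used later admit Hamiltonian paths. For example, a snake ordering works for grids and a Gray code works for $Q_r$. This makes the proposition directly applicable to compute $P_M(\mathcal{T})$ in Lemma~\ref{lem:lemma1} with $M=P_3$.
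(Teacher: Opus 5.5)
Your proof is correct and follows the same route as the paper. The upper bound comes from vertex counting, $3P(\mathcal{G})\le |V|$, and the lower bound from cutting the Hamiltonian path into consecutive triples. Your wording that each triple \emph{spans} a $P_3$ is more accurate than the paper's ``induces'', since $v_{3i-2}v_{3i}$ may also be an edge.

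One aside is wrong, though it does not affect the proof: an H-tree is a tree with degree-3 branch vertices, so it has no Hamiltonian path and the proposition does not apply to it. The paper applies it to $P_3$ packing only on square grids.
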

\begin{proof}
	Let $|V|=k$. Since each vertex-disjoint copy of $P_3$ occupies three vertices, any packing can contain at most $\lfloor k/3\rfloor$ copies.
	Let $v_1,v_2,\ldots,v_k$ be a Hamiltonian path in $\mathcal{G}$. Partition the path into consecutive blocks of three vertices:
	$(v_1, v_2, v_3),\ (v_4, v_5, v_6)$, and so on. Each block induces a copy of $P_3$, and all such copies are vertex-disjoint. Therefore, $\mathcal{G}$ contains at least $\lfloor k/3\rfloor$ vertex-disjoint copies of $P_3$. Combining this with the upper bound proves that the maximum number of vertex-disjoint copies of $P_3$ in $\mathcal{G}$ is exactly $\lfloor |V|/3 \rfloor$.
\end{proof}

Although motivated by square-grid architectures, Proposition~\ref{prop:prop1} applies to any graph containing a Hamiltonian path. In such cases, the $P_3$ motifs can be placed directly along the Hamiltonian path. This observation yields the following result.
\begin{theorem}
	\label{lem:lemma2}
	Consider the mapping of an optimal depth $n$-MCT decomposition from \cite{dutta2025pra} onto a $p\times q$ dimensional square grid topology using the $P_3$ motif. Then the mapping depth satisfies
	$$D_{P_3}(\mathcal{T}) < D_{\pi}(\mathcal{T})
	\left[
	\frac{3}{2} + \frac{3}{4n-6} + \left\lceil \log_2 n \right\rceil
	\right].$$
\end{theorem}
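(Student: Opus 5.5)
The plan is to instantiate Lemma~\ref{lem:lemma1} with $M=P_3$ and to bound each quantity inside the bracket separately. By \cite{dutta2025pra}, the optimal-depth decomposition uses $T_g(n)=n-1$ Toffoli gates and has $D_{\mathrm{logical}}=\lceil\log_2 n\rceil$. Lemma~\ref{lem:lemma1} then gives
$$D_{P_3}(\mathcal{T}) < D_\pi(\mathcal{T})\left[\frac{n-1}{P_{P_3}(\mathcal{T})}+\lceil\log_2 n\rceil\right].$$
So it suffices to show $\frac{n-1}{P_{P_3}(\mathcal{T})}\le \frac32+\frac{3}{4n-6}$. Throughout, I assume $n\ge 2$, so that $4n-6>0$.

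Next I lower-bound $P_{P_3}(\mathcal{T})$ using Proposition~\ref{prop:prop1}. A $p\times q$ square grid admits a Hamiltonian path: traverse the rows in boustrophedon (snake) order, alternating left-to-right and right-to-left, and join consecutive rows by a vertical edge at the shared end. Hence $P_{P_3}(\mathcal{T})=\lfloor pq/3\rfloor$.

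The grid must host all qubits of the logical-AND based mapping. As in the H-tree/$P_3$ setting of Section~\ref{sec:cont1}, these are the $n+1$ working qubits plus $n-2$ ancillas, i.e.\ $q_{\mathrm{req}}=2n-1$. Therefore $pq\ge 2n-1$, and by monotonicity of the floor,
$$P_{P_3}(\mathcal{T})\ \ge\ \left\lfloor \frac{2n-1}{3}\right\rfloor\ \ge\ \frac{2n-1-2}{3}=\frac{2n-3}{3}.$$
Here I use $\lfloor m/3\rfloor\ge (m-2)/3$ for integers $m$.

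Substituting this bound into the ratio gives
$$\frac{n-1}{P_{P_3}(\mathcal{T})}\le\frac{3(n-1)}{2n-3}=\frac32+\frac{3/2}{2n-3}=\frac32+\frac{3}{4n-6},$$
and plugging into Lemma~\ref{lem:lemma1} yields the stated strict bound. The strictness is inherited from Lemma~\ref{lem:lemma1}.

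The only delicate point is justifying the qubit budget $pq\ge 2n-1$. One must be clear that the grid is at least the minimal size $r_{\min}$ supporting $q_{\mathrm{req}}=2n-1$, since the logical-AND decomposition needs no extra Toffoli-level ancilla. One should also check that the worst case of the floor, when $2n-1\equiv 2 \pmod 3$, is what produces the $3/(4n-6)$ term. Everything else is direct substitution.
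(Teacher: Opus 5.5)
Your proof is correct and follows the paper's route. It combines Lemma~\ref{lem:lemma1} (with $T_g(n)=n-1$ and $D_{\mathrm{logical}}=\lceil\log_2 n\rceil$), Proposition~\ref{prop:prop1} (where you also spell out the snake Hamiltonian path, which the paper leaves implicit), and the qubit budget $pq\ge 2n-1$. The only difference is in the last step: you bound $\lfloor (2n-1)/3\rfloor\ge (2n-3)/3$ in one uniform inequality, whereas the paper computes $A(n)=(n-1)/\lfloor(2n-1)/3\rfloor$ exactly in the three residue classes of $n \bmod 3$. That case analysis shows equality holds precisely when $3\mid n$, which matches your remark about the worst case of the floor.
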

\begin{proof}
	From Lemma \ref{lem:lemma1} and Proposition \ref{prop:prop1}, we obtain
	$$D_{P_3}(\mathcal{T}) < D_{\pi}(\mathcal{T})
	\left[
	\frac{n-1}{\left\lfloor \frac{pq}{3}\right\rfloor} + \left\lceil\log_2 n\right\rceil
	\right].$$
	From \cite{dutta2025pra}, an optimal depth $n$-MCT decomposition requires $2n-1$ qubits. Hence, we assume that $pq\geq 2n-1$, which implies
	$\lfloor pq/3\rfloor \ge \lfloor 2n-1/3\rfloor$.
	Therefore,
	\begin{equation}
		D_{P_3}(\mathcal{T}) < D_{\pi}(\mathcal{T})
		\left[
		\frac{n-1}{\left\lfloor\frac{2n-1}{3}\right\rfloor} + \left\lceil\log_2 n\right\rceil
		\right].
		\label{eq:Dp3}
	\end{equation}
	Define $A(n)=
	\frac{n-1}{\left\lfloor\frac{2n-1}{3}\right\rfloor}$.
	To bound $A(n)$, we consider the following cases $n=3m$, $3m+1$, and $3m+2$, where $m\in\mathbb{N}$.\\
	
	\noindent Case I: $n=3m$. Then,
	\begin{align*}
		A(n) = \frac{3m-1}{\lfloor\frac{6m-1}{3}\rfloor} = \frac{3m-1}{2m -1}
		= & \frac{3}{2} + \frac{1}{2(2m-1)}\\
		= & \frac{3}{2} +\frac{3}{4n-6}.
	\end{align*}
	\noindent Case II: $n=3m+1$. Then,
	\begin{align*}
		A(n) = \frac{3m}{\lfloor\frac{6m +1}{3}\rfloor} = \frac{3m}{2m} = \frac{3}{2}.
	\end{align*}
	\noindent Case III: $n=3m+2$. Then,
	\begin{align*}
		A(n) = \frac{3m+1}{\lfloor\frac{6m +3}{3}\rfloor} = \frac{3m+1}{2m+1} = & \frac{3}{2} - \frac{1}{2(2m+1)}\\
		= &\frac{3}{2} - \frac{3}{4n-2}.
	\end{align*}
	Combining the above cases, we obtain
	$$A(n) \leq \frac{3}{2} + \frac{3}{4n-6}.$$
	Hence, substituting $A(n)$ in Eq. \ref{eq:Dp3}, we obtain
	$$D_{P_3}(\mathcal{T}) < D_\pi(T) \left[\frac{3}{2} + \frac{3}{4n-6} + \left\lceil \log_2 n \right\rceil \right].$$
	The claim is shown.
\end{proof}

\subsection{Depth overhead bound for T-depth-1 decomposition}
\label{sub:jaques_packing}
The decomposition shown in FIG.~\ref{fig:Jaques-2d} corresponds to a $C_4$ subgraph. Such a motif occurs naturally in several architectures, including 2D square grids, hypercubes, and Benes-inspired butterfly networks \cite{brierly2017qic}. However, given their practical relevance to current quantum hardware, we primarily consider 2D square grids, while also analyzing the more highly connected hypercube topology.

\subsubsection{2D square grid}
\label{subsub:depth-2dsq}
A 2D square grid is a natural architecture for hosting $C_4$ motifs. Given a 2D square grid topology $\mathcal{T}$ of dimension $p \times q$, here we derive an upper bound on the depth overhead $D_{C_4}(\mathcal{T})$ required to decompose an $n$-MCT gate using the optimal Toffoli-depth mapping strategy from \cite{dutta2025pra}. 
\begin{proposition}
	\label{prop:prop2}
	Let $\mathcal{T}$ be a $p \times q$ square-grid topology. Then the maximum number of pairwise vertex-disjoint $C_4$ motifs in $\mathcal{T}$ is given by $P_{C_4}(\mathcal{T}) = \left\lfloor p/2 \right\rfloor \cdot \left\lfloor q/2 \right\rfloor$.
\end{proposition}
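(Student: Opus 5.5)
The plan is to prove the two inequalities separately: a lower bound from an explicit tiling, and an upper bound from a counting argument that charges each $C_4$ to a distinguished row of the grid.

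\emph{Step 1: the $C_4$ copies are exactly the unit squares.} Index the vertices of $\mathcal{T}$ as $(i,j)$ with $0\le i<p$ and $0\le j<q$. I would first show that every $4$-cycle in $\mathcal{T}$ is a unit square, with vertex set $\{i,i+1\}\times\{j,j+1\}$. A $4$-cycle $v_0v_1v_2v_3$ uses four unit steps whose displacement vectors sum to zero. Since the grid has no triangles, $v_0\neq v_2$, so the cycle is not a back-and-forth walk. The zero-sum condition then forces two horizontal steps of opposite sign and two vertical steps of opposite sign. Hence the four vertices occupy exactly two consecutive rows and two consecutive columns. This step is routine, but it is needed so that the packing problem becomes purely combinatorial.

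\emph{Step 2: lower bound by tiling.} For $0\le a<\lfloor p/2\rfloor$ and $0\le b<\lfloor q/2\rfloor$, take the squares $\{2a,2a+1\}\times\{2b,2b+1\}$. These squares are pairwise vertex-disjoint, as in FIG.~\ref{fig:c4_grid_packing}. This gives $P_{C_4}(\mathcal{T})\ge\lfloor p/2\rfloor\lfloor q/2\rfloor$.

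\emph{Step 3: upper bound.} This is the main obstacle. The naive bound $\lfloor pq/4\rfloor$ is too weak when $p$ or $q$ is odd; for example, a $3\times 3$ grid admits only one disjoint $C_4$, not two. I would therefore call the rows with odd index $i\in\{1,3,5,\dots\}$ the \emph{marked rows}; there are exactly $\lfloor p/2\rfloor$ of them.
\begin{itemize}
\item Every pair of consecutive rows $\{i,i+1\}$ contains exactly one odd index. So, by Step 1, every $C_4$ contains exactly two vertices of exactly one marked row, namely two horizontally adjacent vertices.
\item Assign each $C_4$ in a vertex-disjoint packing to its marked row. The copies assigned to a given marked row use pairwise disjoint pairs of vertices in that row, which has only $q$ vertices. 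Hence at most $\lfloor q/2\rfloor$ copies are assigned to any single marked row.
\item Summing over the $\lfloor p/2\rfloor$ marked rows gives $P_{C_4}(\mathcal{T})\le\lfloor p/2\rfloor\lfloor q/2\rfloor$.
\end{itemize}

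Combining Steps 2 and 3 yields the claimed equality. I expect the only delicate point to be choosing the right charging scheme in Step 3: it must account for squares straddling different row pairs, which is exactly where parity matters. The marked-row trick handles this uniformly for all parities of $p$ and $q$.
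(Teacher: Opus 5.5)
Your proof is correct, and your upper bound follows a genuinely different route from the paper's. The paper reasons in the space of square positions. It observes that there are $p-1$ and $q-1$ elementary squares along the two directions, and argues that a disjoint packing can use at most every alternate one in each direction. It then multiplies to obtain $\lceil (p-1)/2\rceil\lceil (q-1)/2\rceil=\lfloor p/2\rfloor\lfloor q/2\rfloor$. As written, that multiplication is not justified, because a packing need not be a product of a row selection and a column selection; different row strips can use different column offsets. To make it rigorous, one partitions the $(p-1)\times(q-1)$ array of square positions into blocks of size at most $2\times 2$ and notes that each block holds at most one square of a packing. This is the non-attacking-kings bound for the same conflict graph the paper later uses in its MIS formulation. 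Your argument stays on the vertex set instead: each unit square meets exactly one odd-indexed row, in a horizontal domino. Counting disjoint dominoes by pigeonhole in each of the $\lfloor p/2\rfloor$ odd rows gives the bound directly, uniformly in the parities of $p$ and $q$, and without any need to control the shape of the packing. Your Step 1 also makes explicit the identification of $C_4$ copies with unit squares, which the paper uses tacitly. One small fix there: the zero-sum condition together with $v_0\neq v_2$ does not rule out four collinear steps such as $+h,+h,-h,-h$, so you also need $v_1\neq v_3$. Both inequalities come from the distinctness of the cycle's vertices, not from the absence of triangles.
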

\begin{proof}
	A vertex-disjoint packing consists of alternating squares in both the row and column directions, ensuring that no two selected $C_4$ motifs share a common vertex. Consequently, along the row direction, the number of such squares is at least $\lfloor p/2 \rfloor$, while along the column direction it is at least $\lfloor q/2 \rfloor$. Therefore,
	$$P_{C_4}(\mathcal{T}) \geq \left\lfloor p/2\right \rfloor \left \lfloor q/2 \right \rfloor.$$
	
	To derive the upper bound, consider the arrangement shown in FIG.~\ref{fig:c4_grid_packing}. Along the row and column directions, there are $p-1$ and $q-1$ elementary squares, respectively. Since adjacent squares share vertices, a vertex-disjoint packing can include at most every alternate square in each direction. Therefore,
	$$P_{C_4}(\mathcal{T}) \leq \left\lceil\frac{p-1}{2} \right \rceil\  \left \lceil\frac{q-1}{2} \right \rceil = \left \lfloor \frac{p}{2}\right \rfloor \left \lfloor \frac{q}{2} \right \rfloor.$$
	
	Combining this upper bound with the previously established lower bound yields
	$P_{C_4}(\mathcal{T}) = \lfloor p/2\rfloor\lfloor q/2\rfloor$.
\end{proof}

The following theorem derives a bound on the mapping depth as follows.
\begin{theorem}
	\label{lem:lemma3}
	Let $\mathcal{T}$ be a $p \times q$ square-grid topology with packing number $P_{C_4}(\mathcal{T})$. For the optimal Toffoli-depth decomposition of an $n$-MCT gate as described in \cite{dutta2025pra}, the mapping depth satisfies
	$$D_{C_4}(\mathcal{T}) < D_{\pi}(\mathcal{T})\left[\frac{16(n-1)}{2n+1} + \lceil\log_2n\rceil \right].$$
\end{theorem}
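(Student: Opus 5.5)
The plan is to argue exactly as in the proof of Theorem~\ref{lem:lemma2}: specialise Lemma~\ref{lem:lemma1} to the $C_4$ motif, then bound the packing number of Proposition~\ref{prop:prop2} from below by a quantity linear in $n$. By \cite{dutta2025pra}, the binary-tree decomposition uses $T_g(n)=n-1$ Toffoli gates and has $D_{\mathrm{logical}}=\lceil\log_2 n\rceil$. Lemma~\ref{lem:lemma1} therefore gives
$$D_{C_4}(\mathcal{T}) < D_\pi(\mathcal{T})\left[\frac{n-1}{\lfloor p/2\rfloor\lfloor q/2\rfloor}+\lceil\log_2 n\rceil\right].$$
It then suffices to show $\lfloor p/2\rfloor\lfloor q/2\rfloor \ge (2n+1)/16$.

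The first step is a floor-free bound on the packing number. For any integer $p\ge 2$ we have $\lfloor p/2\rfloor \ge p/3 \ge p/4$: if $p$ is even the floor equals $p/2$, and if $p$ is odd with $p\ge3$ it equals $(p-1)/2\ge p/3$. The same holds for $q$. I will assume $p,q\ge 2$; otherwise the grid contains no $C_4$ at all and the mapping with this motif is meaningless. Hence
$$P_{C_4}(\mathcal{T})=\lfloor p/2\rfloor\lfloor q/2\rfloor\ge \frac{pq}{16}.$$
This is the same crude estimate that produces the constant $16$ in the statement.

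The second step brings in the qubit budget. From Section~\ref{sub:mct-square}, the $C_4$ (T-depth-1) realisation needs $3n-2$ physical qubits: $n+1$ working qubits, $n-2$ tree ancillas, and $n-1$ Jaques ancillas. The minimal-topology assumption therefore forces $pq\ge 3n-2$. Since $n\ge3$, we have $3n-2\ge 2n+1$, so $pq\ge 2n+1$. Even the weaker budget ``$2n-1$ tree qubits plus the two extra ancillas of a single $C_4$ batch'' suffices, and I would phrase the requirement so that it yields precisely $pq\ge 2n+1$. Combining the two steps gives $P_{C_4}(\mathcal{T})\ge (2n+1)/16$. Substituting this into the display above gives $(n-1)/P_{C_4}(\mathcal{T})\le 16(n-1)/(2n+1)$, which is the claimed bound, with the strict inequality inherited from Lemma~\ref{lem:lemma1}.

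The main obstacle is bookkeeping, not mathematics: the argument must fix which qubit-count lower bound ($pq\ge 2n+1$ versus $pq\ge 3n-2$) is being assumed, and must handle degenerate grid shapes. I will state explicitly that $p,q\ge2$ and $n\ge3$, since these guarantee both $\lfloor p/2\rfloor\ge p/4$ and $3n-2\ge 2n+1$. I will also remark that the sharper estimate $\lfloor p/2\rfloor\ge p/3$ would improve the constant, but it is not needed for the stated inequality.
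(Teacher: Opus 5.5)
Your proposal is correct and follows the paper's proof. Both arguments specialise Lemma~\ref{lem:lemma1} with $T_g(n)=n-1$ and $D_{\mathrm{logical}}=\lceil\log_2 n\rceil$, use $\lfloor p/2\rfloor\lfloor q/2\rfloor\ge pq/16$ for $p,q\ge2$, and add the size assumption $pq\ge 2n+1$. The only difference is how that assumption is justified: the paper counts $n+1$ working qubits plus at most $2\lfloor n/2\rfloor\le n$ ancillas (the T-depth-1 ancillas being reused between layers), whereas you take it from the $3n-2$ count of Section~\ref{sub:mct-square} together with $n\ge3$; either justification yields the stated bound.
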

\begin{proof}
	From Lemma \ref{lem:lemma1}, substituting $D_{\text{logical}} = \lceil \log_2n \rceil$ and $T_g(n) = n-1$, we obtain
	\begin{equation}
		D_{C_4}(\mathcal{T}) < D_{\pi}(\mathcal{T})\left[\frac{n-1}{P_{C_4}(\mathcal{T})} + \lceil\log_2n\rceil \right].
		\label{eq:depth-c4}
	\end{equation}
	
	From Proposition \ref{prop:prop2}, we have $P_{C_4}(\mathcal{T}) = \left\lfloor p/2 \right\rfloor \left\lfloor q/2 \right\rfloor$. Since $p,q \ge 2$, it follows that $\left\lfloor p/2 \right\rfloor \geq p/4$ and $\left\lfloor q/2 \right\rfloor \geq q/4$. Therefore, 
	$P_{C_4}(\mathcal{T}) \geq pq/16$. Substituting this in Eq.~\ref{eq:depth-c4}, we obtain
	$$D_{C_4}(\mathcal{T}) < D_{\pi}(\mathcal{T})\left[\frac{16(n-1)}{pq} + \lceil\log_2n\rceil \right].$$
	Additionally, since in each layer at most $\lfloor n/2 \rfloor$ Toffoli gates can be executed in parallel, and following FIG.~\ref{fig:Jaques-2d}, each Toffoli decomposition requires one reusable ancilla (which is freed before the next layer), the total number of ancilla required is at most $2\lfloor n/2 \rfloor \le n$. Assuming further that the topology is large enough to accommodate both computation and ancilla qubits, i.e.,
	$pq \geq  (n+1)+n = 2n+1$, 
	we obtain
	$$D_{C_4}(\mathcal{T}) < D_{\pi}(\mathcal{T})\left[\frac{16(n-1)}{2n+1} + \lceil\log_2n\rceil \right].$$
	This completes the proof.
\end{proof}
In the following section, we derive the similar bounds on hypercubes.

\subsubsection{Hypercubes} 
\label{subsub:hypercube}
The $r$-dimensional hypercube, denoted by $Q_r = (V, E)$, is a graph whose vertex set consists of all possible binary strings of length $r$, i.e., $V_{Q_r}=\{(a_1,a_2,\ldots,a_r)\mid a_i\in\{0,1\}\}$. Two vertices are adjacent if and only if they differ in exactly one coordinate. That is, for two vertices $u=(u_1,\ldots,u_r)$ and $v=(v_1,\ldots,v_r)$, $\{u,v\}\in E_{Q_r}$ if and only if there exists a unique index $i\in\{1,\ldots,r\}$ such that $u_i \neq v_i$ and $u_j = v_j$ for all $j\neq i$. In this regard, we have the following result in terms of packing number.
\begin{proposition}
	\label{prop:prop3}
	The maximum number of vertex-disjoint 4-cycles ($C_4$) in an $r$-dimensional hypercube $Q_r$ is $2^{r-2}$.
\end{proposition}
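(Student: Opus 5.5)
The plan is to prove matching upper and lower bounds, in the same two-part pattern as Proposition~\ref{prop:prop1} and Proposition~\ref{prop:prop2}.

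\emph{Upper bound.} $Q_r$ has $2^r$ vertices, and each copy of $C_4$ uses exactly four of them. Hence any vertex-disjoint packing contains at most $\lfloor 2^r/4\rfloor$ copies, which equals $2^{r-2}$ whenever $r\ge 2$. For $r<2$ there is no $C_4$ at all, so we assume $r\ge 2$ throughout.

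\emph{Lower bound.} Here I would give an explicit perfect $C_4$-factor. For each prefix $(a_3,\ldots,a_r)\in\{0,1\}^{r-2}$, take the four vertices $(x,y,a_3,\ldots,a_r)$ with $x,y\in\{0,1\}$. Traversed in the cyclic order $00\to 10\to 11\to 01\to 00$ in the first two coordinates, consecutive vertices differ in exactly one coordinate. So these four vertices span a $4$-cycle in $Q_r$; it is the square in the subcube obtained by freeing the first two coordinates.

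Distinct prefixes give disjoint vertex sets, because the prefix is determined by the last $r-2$ coordinates. The $2^{r-2}$ cycles therefore partition $V$ and are pairwise vertex-disjoint. Equivalently, one can view $Q_r \cong Q_2\times Q_{r-2}$ and take one copy of $Q_2=C_4$ per vertex of $Q_{r-2}$. An inductive phrasing via $Q_r = Q_{r-1}\,\square\, K_2$ also works, doubling the packing at each step from the base case $Q_2=C_4$.

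Combining the two bounds gives exactly $2^{r-2}$. I do not expect any real obstacle: the counting bound is immediate, and the construction is perfect, so no slack analysis is needed. The only points to state explicitly are the range $r\ge 2$ and the verification that each quadruple really forms a cycle (all four edges present), not merely a connected set of four vertices.
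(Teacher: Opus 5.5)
Your proposal is correct and matches the paper's proof: the counting bound $|V|/4 = 2^{r-2}$, together with the explicit packing obtained by freeing the first two coordinates and taking one square for each assignment of the remaining $r-2$ coordinates. Your restriction to $r \ge 2$ and your listing of the four vertices in cyclic order, which shows that all four edges are present, make your version slightly more careful than the paper's.
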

\begin{proof}
	Without considering structural constraints, the maximum value of $P_{C_4}(Q_r)$ is $|V|/4 = 2^{r-2}$, i.e., $P_{C_4}(Q_r) \leq 2^{r-2}$. To prove that this bound is tight, fix two coordinate directions, say $i=1$ and $i=2$. Each 4-cycle (square) can then be represented as
	$(0,0,\mathbf{z}),\ (1,0,\mathbf{z}),\ (0,1,\mathbf{z}),\ (1,1,\mathbf{z})$, where $\mathbf{z} = (a_3, a_4, \dots, a_r)$. These squares are vertex-disjoint whenever the choices of $\mathbf{z}$ are distinct. Hence, the number of disjoint squares is determined by the number of possible $\mathbf{z}$ assignments, which is $2^{r-2}$. Therefore, $P_{C_4}(Q_r) = 2^{r-2}$.
\end{proof}

In this direction, we have the following result.
\begin{theorem}
	\label{lem:lemma4}
	Let $Q_r$ be an $r$-dimensional hypercube. For an optimal depth $n$-MCT decomposition, the mapping depth satisfies
	$$D_{C_4}(Q_r) < D_{\pi}(Q_r) \lceil \log_2n \rceil.$$
\end{theorem}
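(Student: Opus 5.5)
The plan is to go through the general form (\ref{eqn:gen_form}) in the proof of Lemma~\ref{lem:lemma1}, not through the relaxed bound with the $T_g(n)/P_M$ term. The remark at the end of that proof reduces the claim to one fact: every logical layer fits in a single batch, i.e.\ $g_\ell \le P_{C_4}(Q_r)$ for all $\ell$. Then (\ref{eqn:gen_form}) gives $D_{C_4}(Q_r)\le D_\pi(Q_r)\,\lceil\log_2 n\rceil$.

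The first step is to bound the layer widths. In the binary-tree decomposition of \cite{dutta2025pra}, each logical layer contains at most $\lfloor n/2\rfloor$ Toffoli gates, as already used in the proof of Theorem~\ref{lem:lemma3}. Each Toffoli realized via FIG.~\ref{fig:Jaques-2d} occupies one $C_4$: two controls, the target (or intermediate ancilla), and one reusable ancilla.

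The second step is to impose the minimal-size constraint on the hypercube, as in Theorems~\ref{lem:lemma2} and~\ref{lem:lemma3}. I take $r$ with $2^r \ge q_{\mathrm{req}}$. To run a full layer of $\lfloor n/2\rfloor$ Toffolis on disjoint squares one needs $4\lfloor n/2\rfloor \le 2n$ vertices. I will assume $2^r\ge 2n$, which is consistent with the qubit budget $2n+1$ used for the square grid and with rounding up to a power of two. Proposition~\ref{prop:prop3} then gives
$$P_{C_4}(Q_r)=2^{r-2}\ge n/2\ge\lfloor n/2\rfloor\ge g_\ell,$$
so every layer is a single batch.

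The last step is the strict inequality, which I expect to be the main obstacle. The one-batch form gives only $\le$. My first attempt is to observe that the final layer of the tree contains a single Toffoli, whose control qubits are already produced in place by the previous layer. That layer therefore needs strictly less than a full permutation-routing round. Alternatively, the initial placement can be chosen so that the first layer needs no routing at all. Either observation yields $D_{C_4}(Q_r)\le D_\pi(Q_r)(\lceil\log_2 n\rceil-1)<D_\pi(Q_r)\lceil\log_2 n\rceil$. If this cannot be made rigorous, I would state the result with $\le$ and note that the strictness is inherited only in the non-degenerate case. The other delicate point is justifying the choice $2^r\ge 2n$ against a literal reading of $r_{\min}$ with $q_{\mathrm{req}}=2n-1$. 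For $n$ not a power of two, $2^{\lceil\log_2(2n-1)\rceil}\ge 2n$ holds automatically, so only the power-of-two case needs the extra qubit.
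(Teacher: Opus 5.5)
Your core argument is the paper's: $g_\ell\le\lfloor n/2\rfloor$, Proposition~\ref{prop:prop3} gives $P_{C_4}(Q_r)=2^{r-2}\ge g_\ell$ under the size hypothesis, so every layer is a single batch, and (\ref{eqn:gen_form}) sums to $D_\pi(Q_r)\lceil\log_2 n\rceil$.

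On strictness you are more careful than the paper. Its last display asserts $D_{C_4}(Q_r)<D_\pi(Q_r)\sum_\ell\lceil g_\ell/P_{C_4}(Q_r)\rceil$, but (\ref{eqn:gen_form}) gives only $\le$, as the closing remark of Lemma~\ref{lem:lemma1} itself says. So the paper in effect proves only the non-strict bound. Of your two repairs, drop the last-layer one. The two controls of the root Toffoli were targets in different squares of the preceding layer. They must be routed onto opposite corners of a common square together with $t$ and a clean ancilla. Since $D_\pi(Q_r)$ is only an upper bound for arbitrary permutations, nothing shows this particular routing is cheaper. The initial-placement repair is sound, provided you state the convention that the initial layout is free and only inter-layer SWAPs are charged. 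Placing each first-layer Toffoli on its own square of the packing in Proposition~\ref{prop:prop3} then gives $D_{C_4}(Q_r)\le D_\pi(Q_r)\bigl(\lceil\log_2 n\rceil-1\bigr)<D_\pi(Q_r)\lceil\log_2 n\rceil$, using $D_\pi(Q_r)\ge 1$.

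Your size hypothesis needs changing. Your caveat that the power-of-two case needs an extra qubit to reach $2^r\ge 2n$ is mistaken. Since $2n-1$ is odd, $2^r\ge 2n-1$ with $r\ge 1$ already forces $2^r\ge 2n$ for every $n$.

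The actual problem runs the other way: $2n$ vertices need not host the circuit. The widest layer needs Jaques ancillas beyond the $2n-1$ logical qubits, which is why the paper assumes $2^r\ge 2n+1$. When $n$ is a power of two, your hypothesis admits $2^r=2n$. Take $n=4$, $r=3$: the two first-layer Toffolis need two clean ancillas besides the seven logical qubits, because both MCT ancillas are busy as targets and $t$ is not clean. That is nine qubits on eight vertices, so your initial placement cannot exist. With the paper's hypothesis $2^r\ge 2n+1$ you get $P_{C_4}(Q_r)\ge n/2+1/4>n/2$, and the rest of your argument goes through unchanged.
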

\begin{proof}
	Since the maximum number of Toffoli gates that can be executed in parallel is $\lfloor n/2 \rfloor$, we have $g_{\ell}\leq \lfloor n/2 \rfloor$. Moreover, the decomposition in FIG.~\ref{fig:Jaques-2d} requires one additional ancilla per Toffoli gate. Therefore, executing $\lfloor n/2 \rfloor$ Toffoli gates in parallel requires an additional $\lfloor n/2 \rfloor$ ancilla qubits, resulting in a total requirement of $2n+1$ qubits. Assuming that the topology is sufficiently large to accommodate the entire circuit, i.e., $2^r \geq 2n+1$, it follows that $2^{r-2} \geq n/2 + 1/4$. Hence, $P_{C_4}(Q_r) > n/2$, and the corresponding bound becomes trivial:
	$$\left\lceil \frac{g_\ell}{P_{C_4}(Q_r)} \right\rceil = 1.$$
	Putting $D_{\text{logical}} = \lceil \log_2 n\rceil$, the mapping depth becomes
	$$D_{C_4}(Q_r) < D_{\pi}(Q_r) \sum_{\ell=1}^{D_{\text{logical}}} \left\lceil \frac{g_\ell}{P_{C_4}(Q_r)} \right\rceil = D_{\pi}(Q_r) \lceil \log_2 n\rceil.$$
\end{proof}

\subsection{Routing complexity}
\label{sub:complexity}
The routing bound has been extensively studied in several works, as noted in Section \ref{sec:intro}. It is commonly expressed in terms of the routing number of a graph $G$, denoted by $rt(G)$ \cite{childs2019tqc, alon1994siam}. In this section, we combine routing bounds $D_{\pi}(\mathcal{T})$ with packing bounds to obtain the total bound $D_M(\mathcal{T})$. In particular, we consider 2D square grids of size $(p \times q)$, which have routing complexity $\mathcal{O}(2p + q)$ \cite{alon1994siam}, and hypercubes $Q_r$ of dimension $r$, for which bitonic sorting networks yield routing complexity $r(r+1)/2 = \mathcal{O}(\log^2 n)$ \cite{beals2013rspa}.

\section{Experimental Evaluation of Motif-Based Placement}
\label{sec:experiments}
So far, we have derived depth overhead bounds for mapping optimal-depth $n$-MCT decompositions onto grids and hypercubes for which optimal packing bounds can be established. In this section we validate those bounds using a routing-aware placement strategy that incorporates optimal packing, and evaluate their effectiveness againsta the upper-bound. 

We also consider cases for which optimal packing is not provable. Various packing heuristics are compared across different topologies and assess their performance relative to optimal solutions obtained via exhaustive search.

\subsection{Experiments for optimal-packing bounds}
\label{sub:optimal-packing}
As shown in Section~\ref{subsec:gidney_packing}, we derive the optimal packing bounds for $P_3$ and $C_4$ motifs on 2D square grids and hypercubes. To establish these bounds, we employ a set of heuristics that facilitate efficient placement within the optimally selected motifs.

For each batch of Toffoli gates executable within a single layer, logical Toffoli gates are assigned to available physical motif slots using a greedy, routing-aware placement strategy. At each iteration, the algorithm evaluates all unplaced gates, unoccupied motif slots, and valid role assignments within each slot, and selects the assignment with the minimum heuristic cost. The selected gate and motif slot are then removed from further consideration, and the process is repeated until all gates in the batch have been placed. The three cost functions used in our experiments are summarized in TABLE~\ref{tab:placement_heuristics}.
\begin{table*}[htbp]
	\centering
	\renewcommand{\arraystretch}{1.35}
	\begin{tabular}{c c c c}
		\hline
		\hline
		~Heuristic~ & ~~Cost Function \(C_H(\phi)\)~~ & ~Routing Proxy~ & ~Interpretation~ \\[0.1cm]
		\hline
		\(H_1\) &
		\(\displaystyle \sum_{q\in Q(B)} d_T(\pi(q),\phi(q))\) &
		Total displacement &
		~Keeps the batch close to the current placement.~\\[0.1cm]
		\(H_2\) &
		\(\displaystyle \max_{q\in Q(B)} d_T(\pi(q),\phi(q))\) &
		~Worst-case displacement~ &
		Limits the farthest-moving qubit.\\[0.1cm]
		\(H_3\) &
		\(\displaystyle \alpha C_{H_1}(\phi)+\beta C_{H_2}(\phi)\) &
		Balanced displacement &
		Combines total and worst-case movement. \\[0.1cm]
		\hline
	\end{tabular}
	\caption{Routing-aware placement heuristics used for selecting motif placements for each batch.}
	\label{tab:placement_heuristics}
\end{table*}

These heuristics use shortest-path displacement as a lightweight surrogate for routing complexity, avoiding the computational overhead of explicitly estimating the routed depth for every candidate placement.

\subsubsection{Verification of the formulated Bounds}
\label{subsub:verification}
The analytical bounds derived in Section~\ref{sec:depth_bound} are validated through experiments employing routing-aware placement heuristics, as shown in FIG.~\ref{fig:packing_bound_comparison}. In each case, the observed depth overhead is compared against the corresponding theoretical upper bound. For every motif, two plots are presented, corresponding to the minimum and maximum depth bounds obtained after mapping the motif onto a grid topology. The maximum-bound plots consider all grids that satisfy the required size constraints and therefore capture the maximum possible depth overhead. In contrast, the minimum-bound plots illustrate the smallest achievable overhead under the same constraints.
\begin{figure*}[htbp]
	\centering
	\begin{subfigure}{0.48\linewidth}
		\centering
		\includegraphics[width=\linewidth]{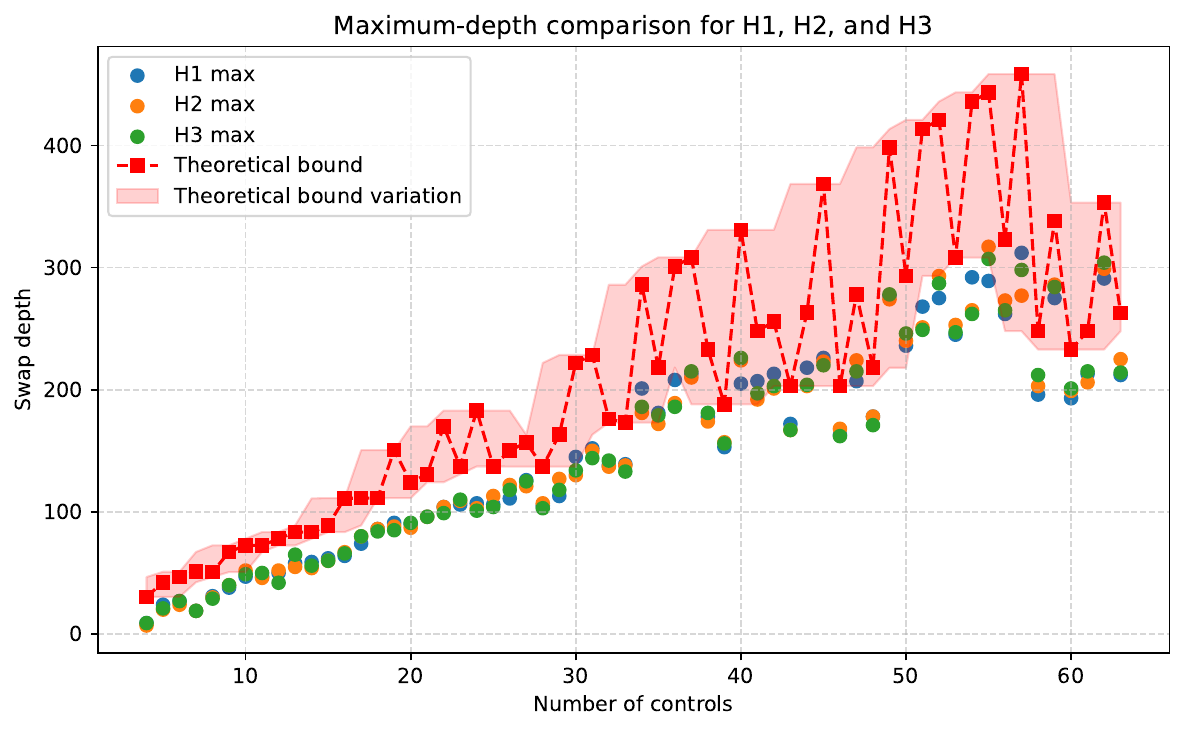}
		\caption{$P_3$ packing bounds for layouts giving maximum depth overhead.}
		\label{fig:bound_a}
	\end{subfigure}
	\hfill
	\begin{subfigure}{0.48\linewidth}
		\centering
		\includegraphics[width=\linewidth]{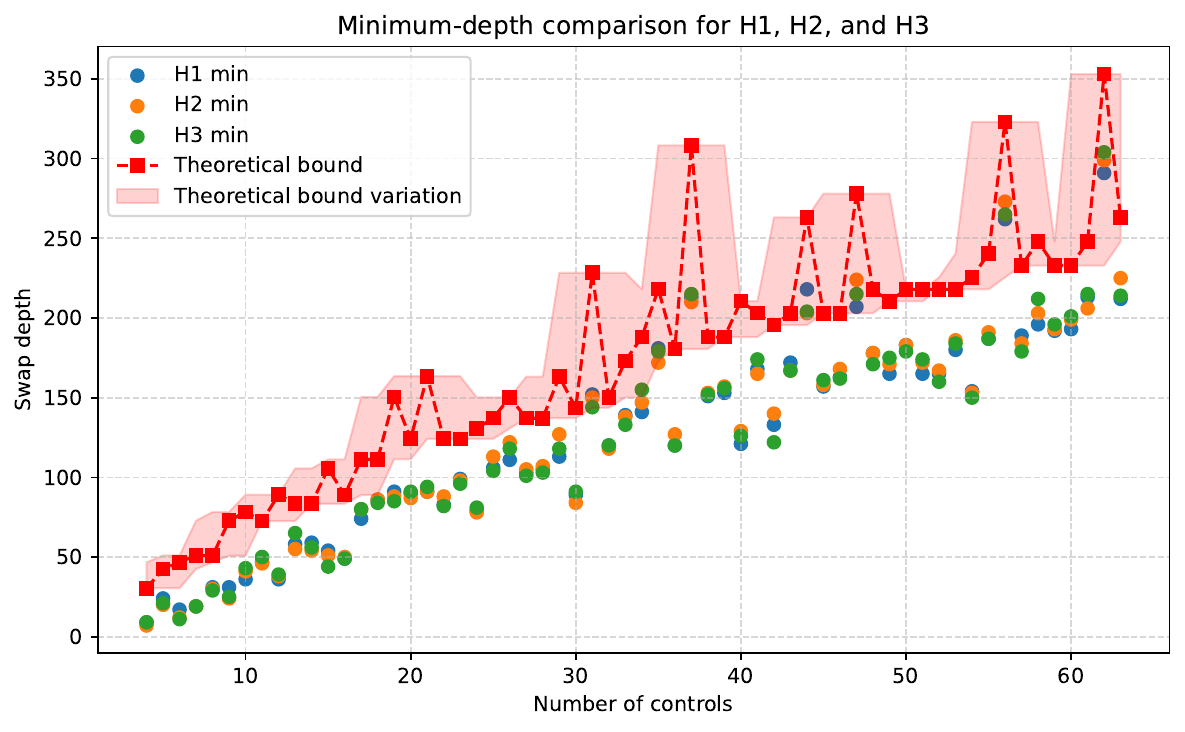}
		\caption{$P_3$ packing bounds for layouts giving  minimum depth overhead.}
		\label{fig:bound_b}
	\end{subfigure}
	\vspace{0.35cm}
	\begin{subfigure}{0.48\linewidth}
		\centering
		\includegraphics[width=\linewidth]{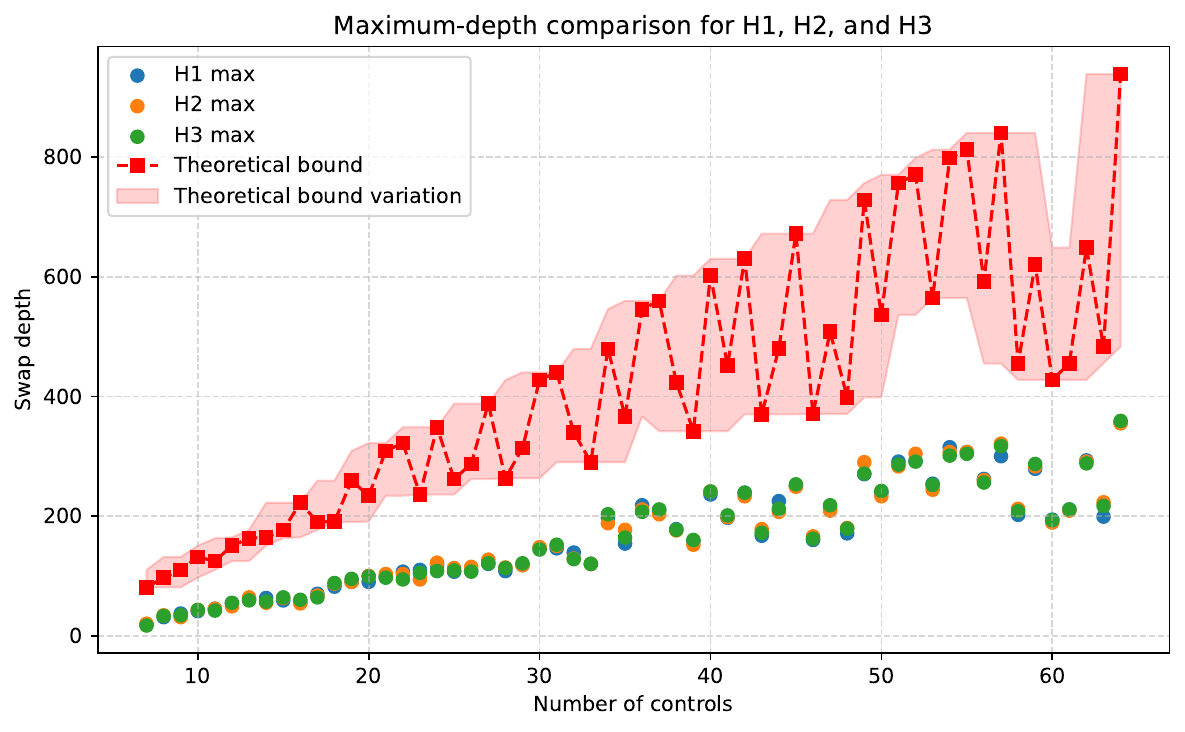}
		\caption{$C_4$ packing bounds for layouts giving  maximum depth overhead.}
		\label{fig:bound_c}
	\end{subfigure}
	\hfill
	\begin{subfigure}{0.48\linewidth}
		\centering
		\includegraphics[width=\linewidth]{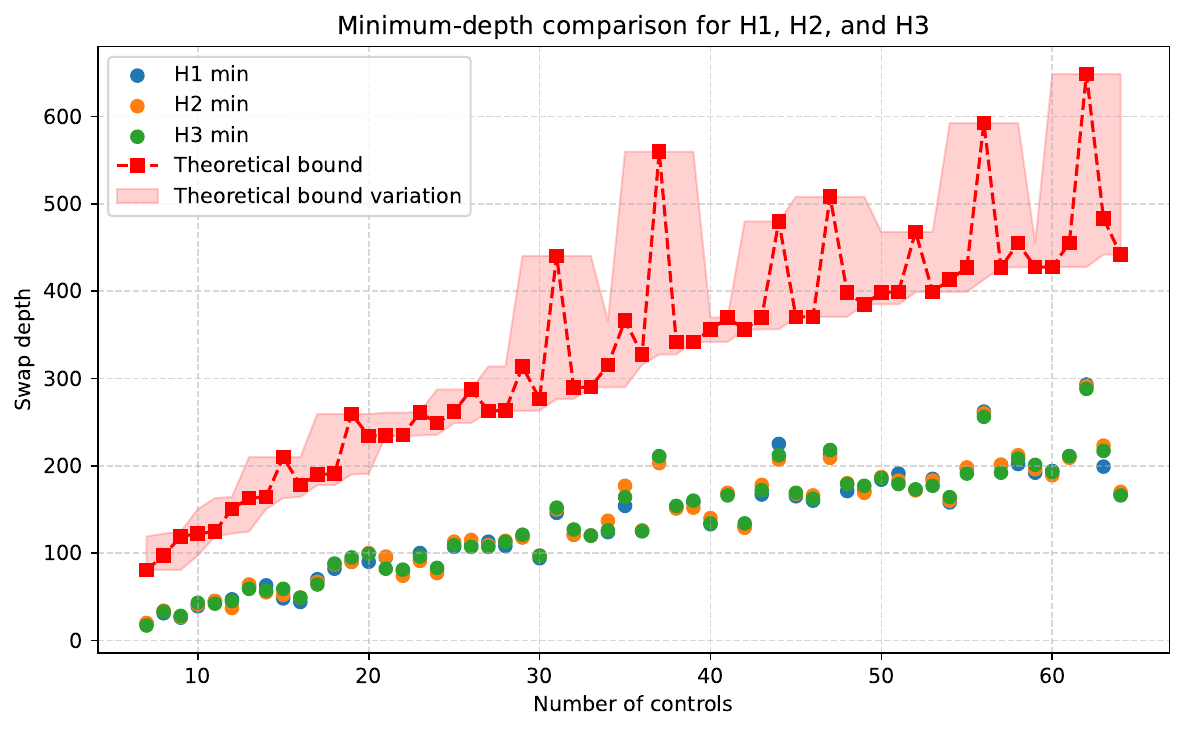}
		\caption{$C_4$ packing bounds for layouts giving  minimum depth overhead.}
		\label{fig:bound_d}
	\end{subfigure}
	\caption{Comparison of the depth overhead bounds obtained using $P_3$ and $C_4$ packing strategies on 2D square-grid layouts. For each packing motif, the bounds are reported for both the maximum-depth and minimum-depth topology among all feasible layouts. The subfigures (a) and (b) show the maximum and minimum depth overhead bounds for $P_3$ packing, respectively, while subfigures (c) and (d) show the corresponding bounds for $C_4$ packing.}
	\label{fig:packing_bound_comparison}
\end{figure*}

In Figure \ref{fig:bound_hyp} We also see similar bounds for $C_4$ packing in hypercube with minimum possible size ($r$), the placement heuristics try to minimize routing costs and we can see that the bound is satisfied in this case considering the overall bound as mentioned in Section \ref{sec:depth_bound}. The step nature of the function and the results are because of the $\lceil \log_2 n\rceil$ in the bound as mentioned in Theorem \ref{lem:lemma4}.

\begin{figure}
	\centering
	\includegraphics[width=\linewidth]{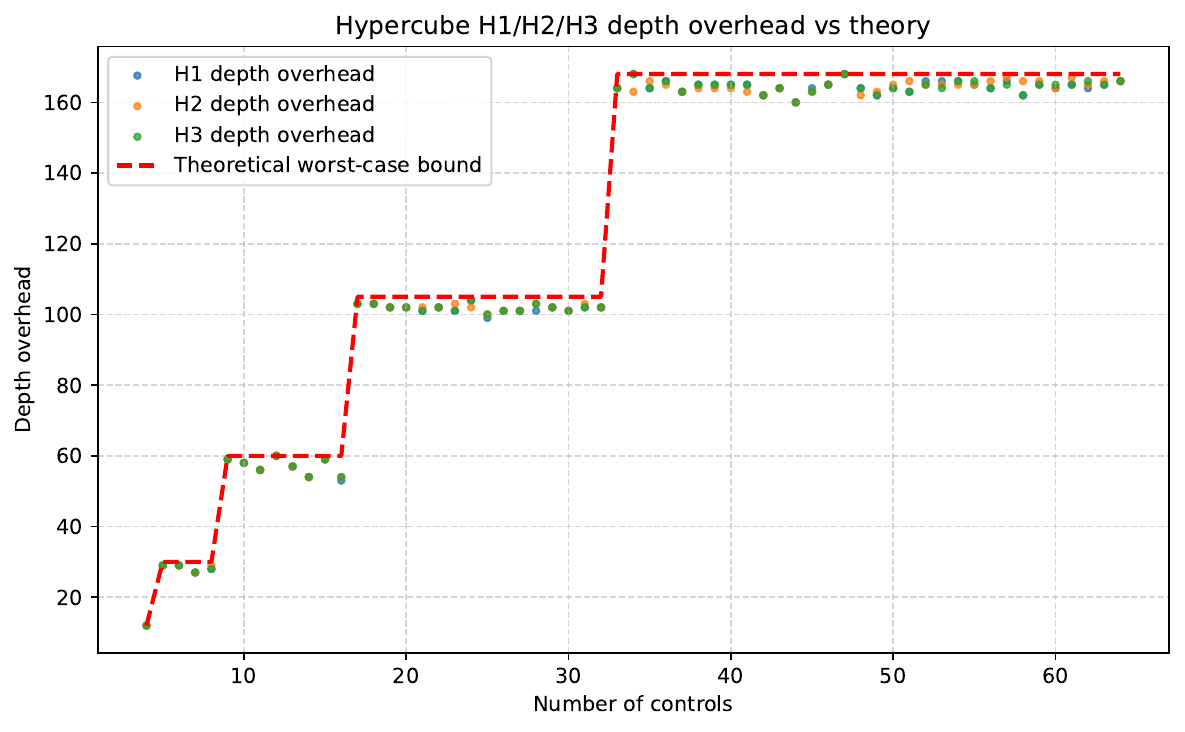}
	\caption{Depth overhead bound for packing $C_4$ motifs in hypercube layouts.}
	\label{fig:bound_hyp}
\end{figure}
Across all evaluated instances, the measured depth overhead remains below the theoretical upper bound. Although the bound is guaranteed to hold for all valid placement and routing configurations, the routing-aware heuristics frequently exploit favorable placement opportunities, resulting in significantly lower overheads in practice.

These experimental results demonstrate that the derived upper bounds are conservative and robust. In particular, the observed performance of the placement and routing heuristics consistently satisfies the proposed bounds, thereby validating their correctness and practical applicability.

\subsubsection{Comparison with other layout mechanisms}
\label{subsub:comparison}
We compare IBM's SABRE Layout and Dense Layout with the proposed packing-based placement methods, using the routing framework of \cite{alon1994siam}. To isolate the effect of placement from that of routing, all placement strategies are evaluated using the same permutation-routing backend. This choice is motivated by the fact that placement and routing are inherently coupled optimization problems, and routing heuristics such as SABRE are typically designed to operate in conjunction with their corresponding placement algorithms. Consequently, combining a packing-based placement strategy with an unrelated routing heuristic could obscure the true impact of the placement method.

For consistency, we therefore employ a common routing backend for all placement approaches and evaluate the resulting depth overhead using the worst-case permutation-routing bound for two-dimensional grids, namely $\mathcal{O}(2p+q)$ for a $p \times q$ grid with $p \leq q$. Under this framework, the observed differences can be attributed primarily to the quality of the generated layouts rather than to the routing methodology. The corresponding experimental results are presented in FIG.~\ref{fig:sabre_dense_comparison}.

\begin{figure*}[htbp]
	\centering
	
	\begin{subfigure}{0.48\linewidth}
		\centering
		\includegraphics[width=\linewidth]{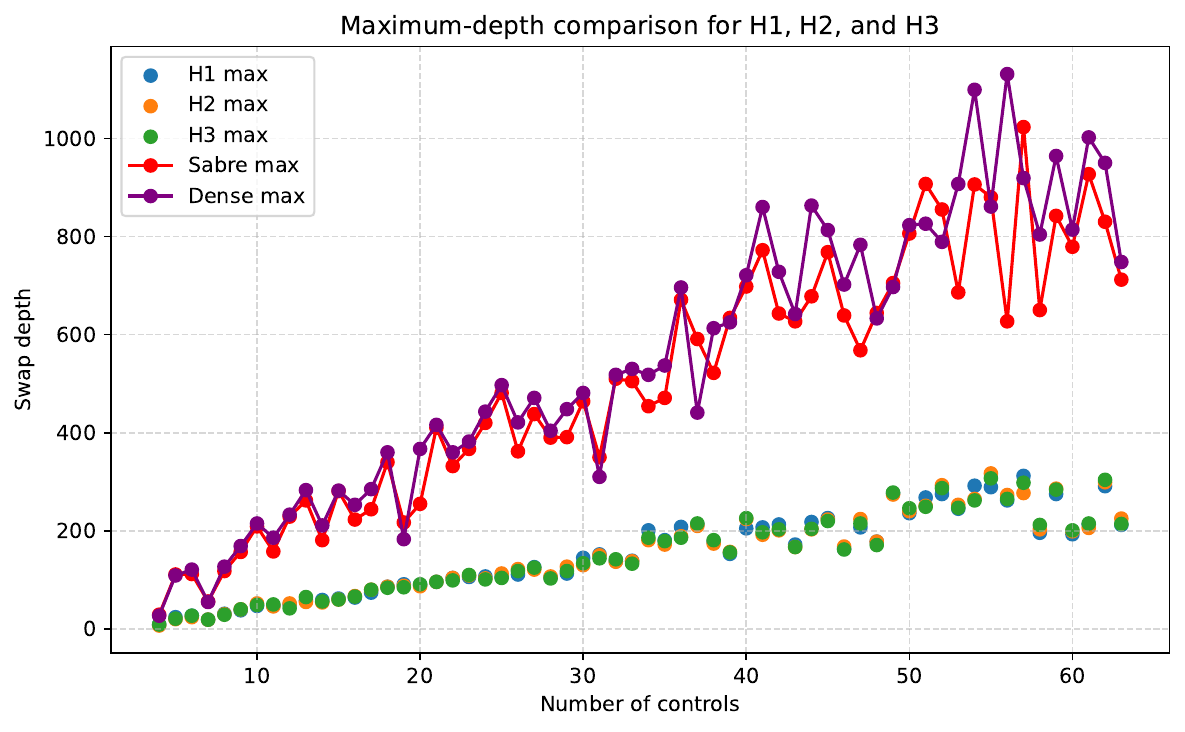}
		
		\caption{Maximum depth $P_3$ packing comparison with SABRE/ Dense layout.}
		\label{fig:bound_e}
	\end{subfigure}
	\hfill
	\begin{subfigure}{0.48\linewidth}
		\centering
		\includegraphics[width=\linewidth]{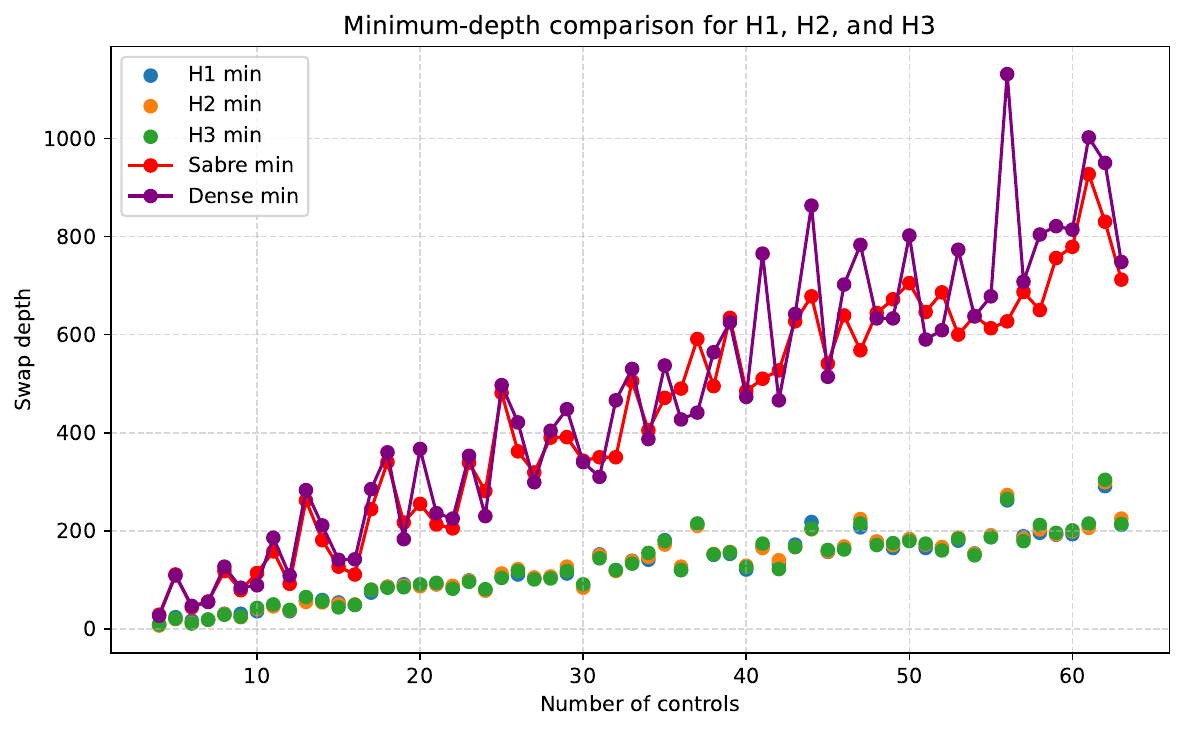}
		
		\caption{Minimum depth $P_3$ packing comparison with SABRE/ Dense layout.}
		\label{fig:bound_f}
	\end{subfigure}
	
	\vspace{0.35cm}
	
	\begin{subfigure}{0.48\linewidth}
		\centering
		\includegraphics[width=\linewidth]{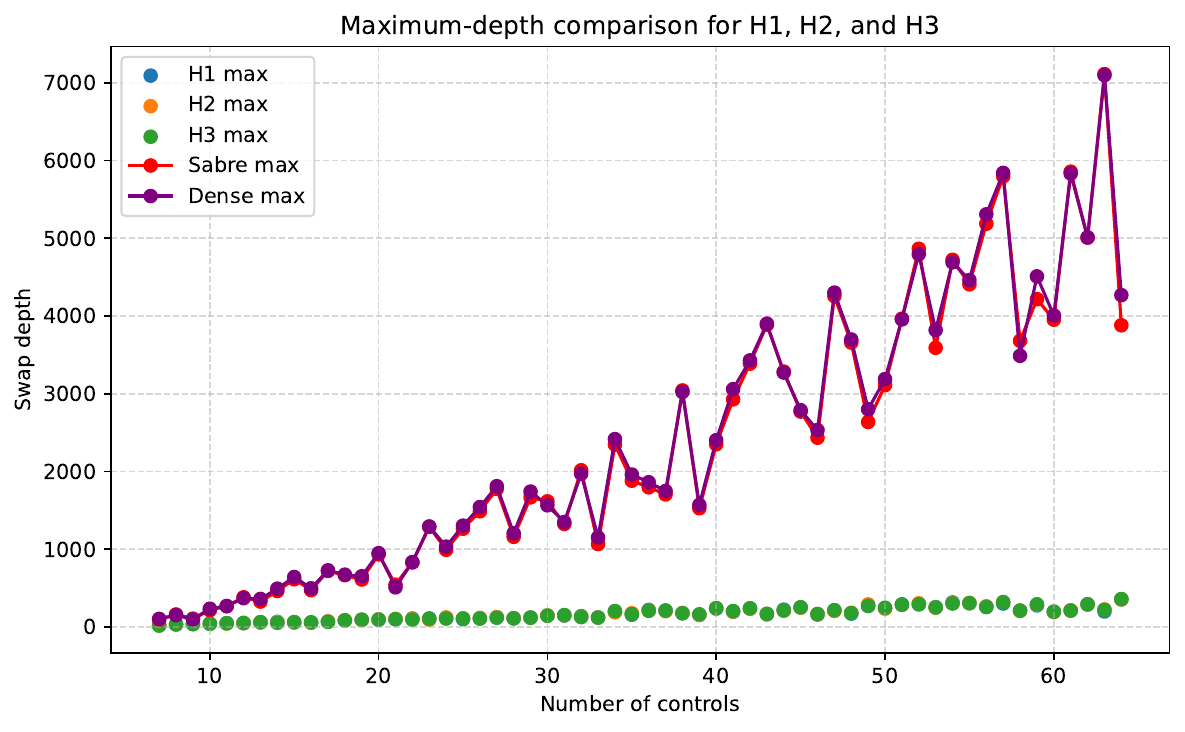}       
		\caption{Maximum depth $C_4$ packing comparison with SABRE/ Dense layout.}
		\label{fig:bound_g}
	\end{subfigure}
	\hfill
	\begin{subfigure}{0.48\linewidth}
		\centering
		\includegraphics[width=\linewidth]{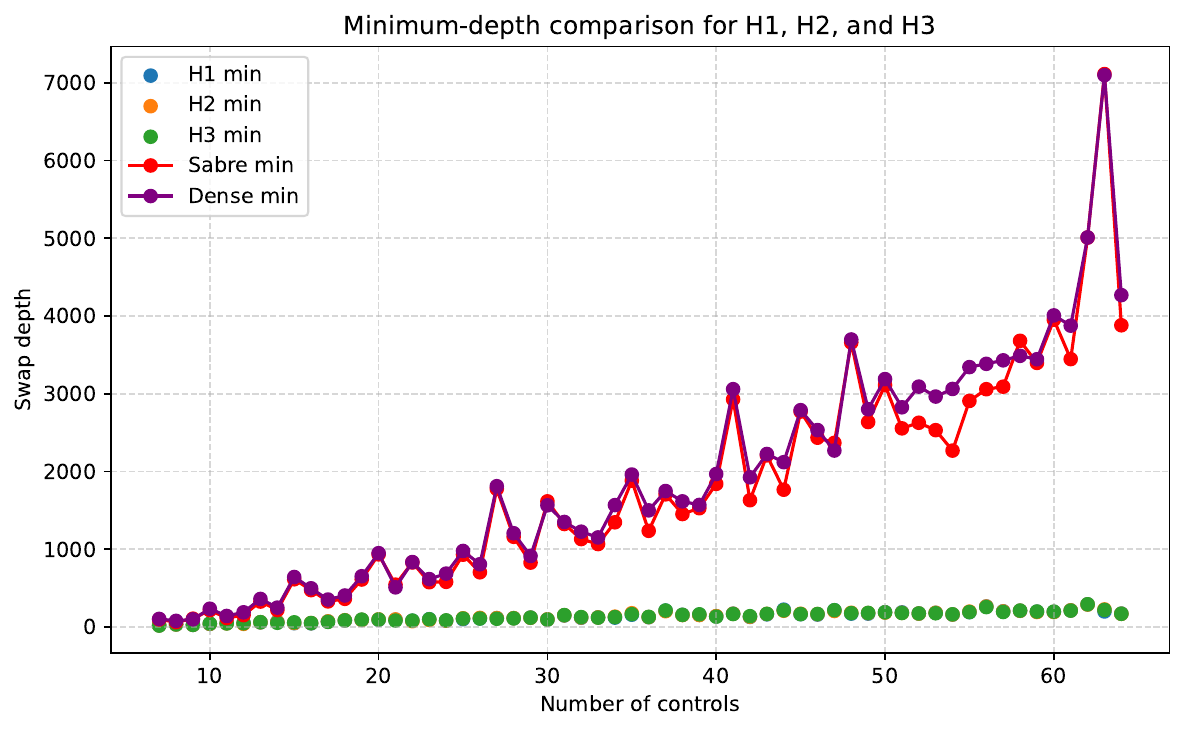}
		\caption{Minimum depth $C_4$ packing comparison with SABRE/ Dense layout.}
		\label{fig:bound_h}
	\end{subfigure}
	\caption{Comparison of the maximum and minimum depth overheads achieved by the proposed packing-based placement strategies and those obtained using the SABRE and Dense layouts from IBM. Subfigures (a) and (b) show comparisons for the $P_3$ packing strategy, while subfigures (c) and (d) present results for the $C_4$ packing strategy, using the routing upper-bound framework of Alon et al.~\cite{alon1994siam}.}
	\label{fig:sabre_dense_comparison}
\end{figure*}
\subsection{Heuristic packing for topologies without closed-form bounds}
\label{sub:heuristic packing}
Packing motifs into arbitrary graphs does not, in general, admit a closed-form bound. In our experiments, we select motifs based on interaction graphs arising from Toffoli-to-T decompositions \cite{amy2013ieee,gidney2018quantum}, as well as geometric structures in specific topologies and Toffoli constructions. In particular, we observe such a structure in the 4-MCT construction of \cite{leiserson1980htree} (FIG.~\ref{fig:h4}) within hexagonal and heavy-hex architectures, and similarly in the X-type stabilizers for the hexagonal code \cite[FIG.~2]{chamberland2020prx}. These structures enable higher-order MCT decompositions with reduced depth overhead through effective grouping.

When no formal packing bound is available, we employ branch-and-bound techniques to explore the space of feasible configurations and identify optimal packings. However, this approach is computationally expensive in both memory and runtime. To address this, we adopt a heuristic formulation based on a conflict graph, reducing the problem to finding a maximum independent set (MIS), as described in the following subsections. We compare the heuristic and the branch and bound capability in TABLE~\ref{tab:mis-heuristic-comparison}.

\subsubsection{Graph formulation and heuristics}
\label{subsub:graph-heuristic}
Given a graph $\mathcal{G}(V, E)$, an independent set $S$ is a subset of vertices $S \subseteq V$ such that no two vertices in $S$ are adjacent. The maximum independent set (MIS) problem seeks an independent set of maximum cardinality.

The MIS problem is a classical graph-theoretic formulation for selecting mutually non-interacting candidate sets. In our setting, this arises naturally since each gate is mapped to a subgraph location within the architecture. Each vertex of $G$ represents a potential placement for a Toffoli gate, and an edge is introduced between two vertices if their corresponding subgraphs overlap (i.e., share at least one physical qubit). Solving the MIS problem thus yields the maximum number of vertex-disjoint subgraphs in which Toffoli gates can be placed simultaneously. This problem is NP-hard and is typically addressed using exact methods, such as branch-and-bound, or heuristic approaches, including greedy and local search techniques. The heuristics used in this work are described below.

\paragraph{Min-degree MIS.}
This heuristic constructs an independent set by iteratively selecting vertices with the minimum degree (i.e., the fewest conflicts). At each step, the selected vertex is added to the solution, and all its neighbors are removed from further consideration. The intuition is that low-degree vertices are more likely to be included in a large independent set than high-degree vertices.

\paragraph{Max-degree removal MIS.}
This heuristic adopts the complementary strategy of iteratively removing the highest-degree vertices (i.e., those involved in the most conflicts). The process continues until all edges are eliminated, and the remaining vertices form an independent set corresponding to a valid packing.

\paragraph{Min-degree MIS + local search.}
Empirically, the min-degree heuristic performs well; hence, we augment it with a local improvement phase. In this refinement step, we attempt to replace a selected vertex with up to two non-selected vertices, provided they are mutually non-conflicting. The process broadens the search space and improves solution quality beyond the greedy baseline.
\begin{table*}[htbp]
	\centering
	\scalebox{0.9}{\begin{tabular}{l c c c c c}
			\hline
			\hline
			Heuristic & ~Match Rate $\uparrow$ & ~Avg.\ Ratio $\uparrow$~ & Max Gap $\downarrow$  & Topology & Packing Motif\\[0.1cm]
			\hline
			Min-degree MIS & 0.88 & 0.99 & 1 & IBM Tokyo Q-20; FIG.~\ref{fig:tokyo} & $C_3$\\
			Max-degree removal MIS & 0.53 & 0.93 & 3 & IBM Tokyo Q-20; FIG.~\ref{fig:tokyo} & $C_3$\\
			Min degree MIS + local search & 0.8846 &  0.99 & 1 & IBM Tokyo Q-20; FIG.~\ref{fig:tokyo} & $C_3$\\[1pt]
			\hline
			Min-degree MIS & 1 & 1  & 0 & Slashed-Square 2D grid; FIG.~\ref{fig:slashed} & $C_3$\\
			Max-degree removal MIS & 0.73 & 0.95 & 3 & Slashed-Square 2D grid; FIG.~\ref{fig:slashed} & $C_3$\\
			Min degree MIS + local search & 1 &  1 & 0 & Slashed-Square 2D grid; FIG.~\ref{fig:slashed} & $C_3$\\[1pt]
			\hline
			Min-degree MIS & 0.84 & 0.99 & 2 & IBM-Heavy Hex; FIG.~\ref{fig:heavy-hex} & $P_3$\\
			Max-degree removal MIS & 0.14 & 0.92 &  4 & IBM-Heavy Hex; FIG.~\ref{fig:heavy-hex} & $P_3$\\
			Min degree MIS + local search & 0.84 & 0.99 & 2 & IBM-Heavy Hex; FIG.~\ref{fig:heavy-hex} & $P_3$\\[1pt]
			\hline
			Min-degree MIS & 1 & 1 & 0 & Hexagonal; FIG.~\ref{fig:hex} & $P_3$\\
			Max-degree removal MIS & 0.5 & 0.93 &  2 & Hexagonal; FIG.~\ref{fig:hex} & $P_3$\\
			Min degree MIS + local search & 1 & 1 & 0 & Hexagonal; FIG.~\ref{fig:hex} & $P_3$\\[1pt]
			\hline
			Min-degree MIS & 0.95 & 0.992 & 1& IBM-Heavy Hex; FIG.~\ref{fig:heavy-hex} & 4-MCT H-Tree \ref{fig:h4}\\
			Max-degree removal MIS & 0.803 & 0.960 & 2 & IBM-Heavy Hex; FIG.~\ref{fig:heavy-hex} & 4-MCT H-Tree \ref{fig:h4}\\
			Min degree MIS + local search & 0.96 &  0.995 & 1 & IBM-Heavy Hex; FIG.~\ref{fig:heavy-hex} & 4-MCT H-Tree \ref{fig:h4}\\[1pt]
			\hline
			Min-degree MIS & 0.652 & 0.958 & 2 & Hexagonal; FIG.~\ref{fig:hex} & 4-MCT H-Tree FIG.~\ref{fig:h4}\\
			Max-degree removal MIS & 0.347 & 0.871 & 4 & Hexagonal; FIG.~\ref{fig:hex} & 4-MCT H-Tree FIG.~\ref{fig:h4}\\
			Min degree MIS + local search & 0.652 & 0.958 & 2 & Hexagonal; FIG.~\ref{fig:hex} & 4-MCT H-Tree FIG.~\ref{fig:h4}\\[1pt]
			\hline
	\end{tabular}}
	\caption{Comparison of MIS-based triangular-packing heuristics against the exact branch-and-bound optimum for topologies with the minimum size constraint for $n$-MCT decompositions, for $n$ ranging from 4 to 60, following \cite{dutta2025pra}.}
	\label{tab:mis-heuristic-comparison}
\end{table*}

\begin{figure*}[htbp]
	\centering
	\begin{subfigure}{0.22\textwidth}
		\centering
		\includegraphics[scale=0.35]{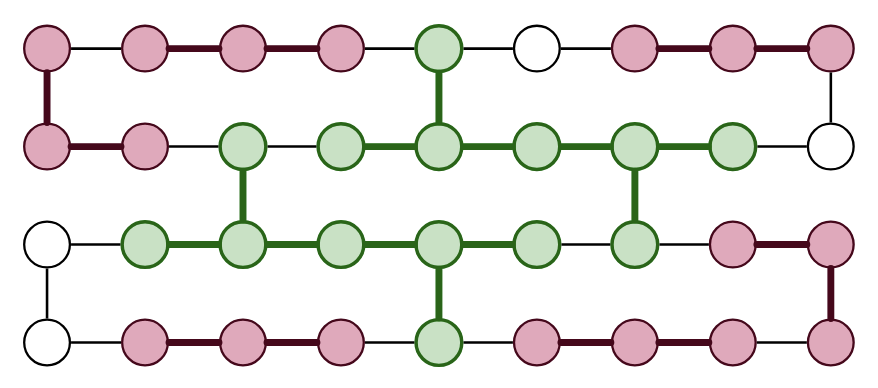}
		\caption{\textcolor{green!40!black}{4-MCT} and \textcolor{purple!40!black}{$P_3$}-Packing in IBM- heavy hex \cite{jurcevic2021qst,chamberland2020prx}}
		\label{fig:heavy-hex}
	\end{subfigure}
	\hfill
	\begin{subfigure}{0.22\textwidth}
		\centering
		\includegraphics[scale=0.35]{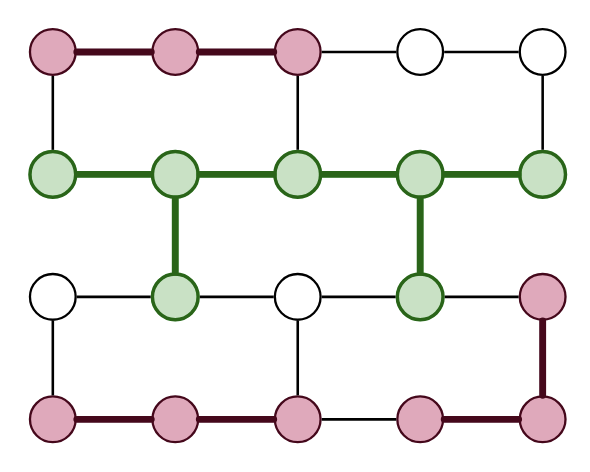}
		\caption{\textcolor{green!40!black}{4-MCT} and \textcolor{purple!40!black}{$P_3$}-Packing in hexagonal architecture}
		\label{fig:hex_packing}
	\end{subfigure}
	\begin{subfigure}{0.22\textwidth}
		\centering
		\includegraphics[scale=0.35]{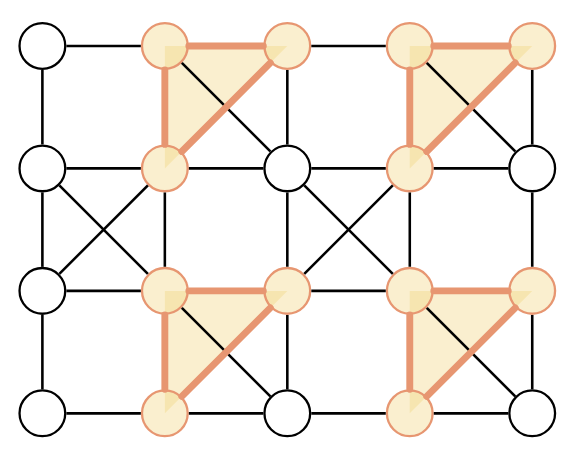}
		\caption{\textcolor{orange} {$C_3$} packing in IBM Tokyo Q-20 \cite[FIG. 2]{li2019asplos}}
		\label{fig:tokyo}
	\end{subfigure}
	\begin{subfigure}{0.22\textwidth}
		\centering
		\includegraphics[scale=0.35]{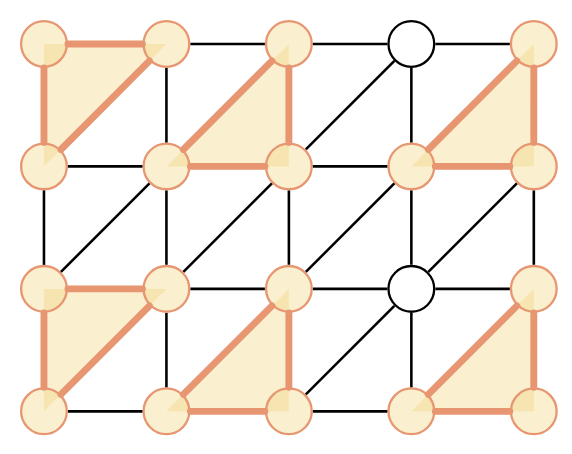}
		\caption{\textcolor{orange}{$C_3$ }packing in slashed-square 2D grid}
		\label{fig:slashed}
	\end{subfigure}
	\caption{The placement of the motif packings shown in the TABLE~\ref{tab:mis-heuristic-comparison} (last column) on various qubit topologies. FIG.~\ref{fig:heavy-hex} and FIG.~\ref{fig:hex_packing} show the placement of 4-MCT motifs (highlighted in \textcolor{green!40!black}{green}) and $P_3$ motifs (highlighted in \textcolor{purple!40!black}{purple}). Moreover, FIG.~\ref{fig:tokyo} and FIG.~\ref{fig:slashed} depict the placement of $C_3$ motifs on the Q-20 topology and the Slashed-square 2D grid, respectively, highlighted in \textcolor{orange}{yellow}.}
	\label{fig:packing_heuristic_topology}
\end{figure*}

\subsubsection{Results and inferences}
\label{subsub:results}
It can be observed in TABLE~\ref{tab:mis-heuristic-comparison} that relatively simple heuristics yield near-optimal results, indicating that the structure of the conflict graph is highly exploitable.
The metrics are match rate, average ratio, and the maximum gap. The match rate shows how often the optimal and the heuristic match, The average ratio represents average packing quality relative to optimum and the Maximum gap represents the worst deviation from optimal. 

In particular, the $P_3$, $C_3$, and H-tree motifs as demonstrated in FIG. \ref{fig:packing_heuristic_topology} exhibit favorable structural properties. Local search provides only marginal improvement over the greedy heuristic and remains largely consistent with it, further supporting the near-optimality of the greedy approach. In contrast, the maximum-removal heuristic performs significantly worse, suggesting that selecting low-conflict motifs is more effective than eliminating high-conflict ones.

The code used for all numerical simulations and experiments reported in this work is publicly available through the GitHub repository \cite{github}.

\section{Conclusion}
\label{sec:con}
In this paper, we mapped state-of-the-art optimal T-depth multi-controlled Toffoli decompositions onto 2D nearest-neighbor qubit layouts. Our analysis first considered an infinite 2D grid, where the MCT decompositions were implemented using the same number of qubits as the logical circuit through topology-aware decomposition without requiring any SWAP operations. We then turned to restricted 2D layouts with limited connectivity, where SWAP operations become unavoidable, and estimated the resulting depth overhead under different routing mechanisms.

Future work may explore mapping strategies and routing overhead bounds for different highly connected architectures like the cyclic-butterfly and other Ben\u es~\cite{benevs1964bstj} inspired networks with the packing strategy to explore for further improvement in depth complexity as told in~\cite{brierly2017qic}. Further investigation could also examine technology-aware 2D mapping approaches for general logical circuits.

\end{document}